\pdfoutput=1
\documentclass[a4paper,UKenglish,cleverref,autoref,thm-restate,nolineno]{socg-lipics-v2021}

\bibliographystyle{plainurl}

\usepackage{xspace}

\usepackage{booktabs}
\usepackage{makecell}

\usepackage{tikz}
\usetikzlibrary{calc,positioning}
\usepackage{pgfplots}
\pgfplotsset{compat=1.18}

\usepackage[ruled]{algorithm2e}
\SetKwComment{Comment}{/*}{*/}
\SetKw{Break}{break}
\SetKw{Continue}{continue}

\begin{CCSXML}
<ccs2012>
   <concept>
       <concept_id>10003752.10003809.10010047</concept_id>
       <concept_desc>Theory of computation~Online algorithms</concept_desc>
       <concept_significance>300</concept_significance>
       </concept>
   <concept>
       <concept_id>10003752.10003809.10010170.10010171</concept_id>
       <concept_desc>Theory of computation~Shared memory algorithms</concept_desc>
       <concept_significance>300</concept_significance>
       </concept>
   <concept>
       <concept_id>10003752.10003809.10010031.10002975</concept_id>
       <concept_desc>Theory of computation~Data compression</concept_desc>
       <concept_significance>500</concept_significance>
       </concept>
   <concept>
       <concept_id>10003752.10003809.10010031.10010033</concept_id>
       <concept_desc>Theory of computation~Sorting and searching</concept_desc>
       <concept_significance>300</concept_significance>
       </concept>
   <concept>
       <concept_id>10002950.10003624.10003625.10003631</concept_id>
       <concept_desc>Mathematics of computing~Combinatorics on words</concept_desc>
       <concept_significance>500</concept_significance>
       </concept>
 </ccs2012>
\end{CCSXML}

\ccsdesc[300]{Theory of computation~Online algorithms}
\ccsdesc[300]{Theory of computation~Shared memory algorithms}
\ccsdesc[500]{Theory of computation~Data compression}
\ccsdesc[300]{Theory of computation~Sorting and searching}
\ccsdesc[500]{Mathematics of computing~Combinatorics on words}

\date{\today}
\author{Jannik Olbrich}{Ulm University, Germany}{jannik.olbrich@uni-ulm.de}{https://orcid.org/0000-0003-3291-7342}{}
\authorrunning{J.\ Olbrich}
\Copyright{Jannik Olbrich}
\keywords{Burrows-Wheeler Transform, Grammar compression}
\supplementdetails[subcategory={Source Code},swhid=swh:1:dir:0274cae4abb893b49bc035ef3a57986a418afe40]{Software}{https://gitlab.com/qwerzuiop/lyndongrammar}
\title{Fast and memory-efficient BWT construction of repetitive texts using Lyndon grammars}
\titlerunning{Efficient BWT construction using Lyndon grammars}
\hideLIPIcs

\EventEditors{Anne Benoit, Haim Kaplan, Sebastian Wild, and Grzegorz Herman}
\EventNoEds{4}
\EventLongTitle{33rd Annual European Symposium on Algorithms (ESA 2025)}
\EventShortTitle{ESA 2025}
\EventAcronym{ESA}
\EventYear{2025}
\EventDate{September 15--17, 2025}
\EventLocation{Warsaw, Poland}
\EventLogo{}
\SeriesVolume{351}
\ArticleNo{58}

\begin{document}

\let\originalleft\left
\let\originalright\right
\renewcommand{\left}{\mathopen{}\mathclose\bgroup\originalleft}
\renewcommand{\right}{\aftergroup\egroup\originalright}

\newcommand{\BWT}{\mathsf{BWT}\xspace}
\newcommand{\eBWT}{\mathsf{eBWT}\xspace}
\newcommand{\BBWT}{\mathsf{BBWT}\xspace}
\newcommand{\concBWT}{\mathsf{concBWT}\xspace}
\newcommand{\mdolBWT}{\mathsf{mdolBWT}\xspace}
\newcommand{\dolEBWT}{\mathsf{dolEBWT}\xspace}
\newcommand{\SA}{\mathsf{SA}\xspace}
\newcommand{\LCP}{\mathsf{LCP}\xspace}
\newcommand{\Ls}{\mathcal{L}\xspace}
\newcommand{\pss}{\mathrm{pss}\xspace}
\newcommand{\nss}{\mathrm{nss}\xspace}
\newcommand{\lexlt}{<_{\mathit{lex}}}
\newcommand{\lexgt}{>_{\mathit{lex}}}
\newcommand{\lexleq}{\leq_{\mathit{lex}}}
\newcommand{\lexgeq}{\geq_{\mathit{lex}}}
\newcommand{\omlt}{<_\omega}
\newcommand{\omgt}{>_\omega}
\newcommand{\omleq}{\leq_\omega}
\newcommand{\lce}[2]{\mathrm{lce}(#1, #2)}
\newcommand{\bra}[1]{\left(#1\right)}
\newcommand{\set}[1]{\left\{#1\right\}}
\newcommand{\ceil}[1]{\left\lceil#1\right\rceil}
\newcommand{\abs}[1]{\left|#1\right|}
\newcommand{\bigO}[1]{\mathcal{O}\left(#1\right)}
\newcommand{\intervCC}[2]{\left[#1\,..\,#2\right]}
\newcommand{\intervCO}[2]{\left[#1\,..\,#2\right)}
\newcommand{\intervOC}[2]{\left(#1\,..\,#2\right]}
\newcommand{\intervOO}[2]{\left(#1\,..\,#2\right)}
\newcommand{\sroot}{\operatorname{root}}
\newcommand{\period}{\operatorname{period}}
\newcommand{\val}[1]{\left[#1\right]}
\newcommand{\suff}[2]{\operatorname{suf}_{#2}\bra{#1}}
\newcommand{\todo}[1]{\textcolor{red}{\textsc{todo: }}\textcolor{gray}{#1}}

\newcommand{\cmsbwt}{\texttt{CMS-BWT}\xspace}
\newcommand{\libsais}{\texttt{libsais}\xspace}
\newcommand{\ropebwt}{\texttt{ropebwt3}\xspace}
\newcommand{\bigbwt}{\texttt{Big-BWT}\xspace}
\newcommand{\rpfbwt}{\texttt{r-pfbwt}\xspace}
\newcommand{\pfpebwt}{\texttt{PFP-eBWT}\xspace}
\newcommand{\grlbwt}{\texttt{grlBWT}\xspace}

\colorlet{factor_col}{brown!75!white!90!gray}

\maketitle
\begin{abstract}
	The Burrows-Wheeler Transform ($\BWT$) serves as the basis for many important sequence indexes.
	On very large datasets (e.g.\ genomic databases), classical $\BWT$ construction algorithms are often infeasible because they usually
	need to have the entire dataset in main memory.
	Fortunately, such large datasets are often highly repetitive.
	It can thus be beneficial to compute the $\BWT$ from a compressed representation.
	We propose an algorithm for computing the $\BWT$ via the Lyndon straight-line program, a grammar
	based on the standard factorization of Lyndon words.
	Our algorithm can also be used to compute the extended $\BWT$ ($\eBWT$) of a multiset of sequences.
	We empirically evaluate our implementation and find that we can compute the $\BWT$ and $\eBWT$ of very large datasets faster and/or with less memory than competing methods.
\end{abstract}

\section{Introduction}

The \emph{Burrows-Wheeler Transform} ($\BWT$) \cite{bwt} is one of the most important data structures in sequence analysis and is applied in a wide range of fields from data compression to bioinformatics.
It is obtained by assigning the $i$th symbol of the $\BWT$ to the last character of the $i$th lexicographically smallest conjugate of the input text.
The $\BWT$ can be computed in linear time and space.
The run-length compressed $\BWT$ ($\mathsf{RLBWT}$) uses the fact that the number of equal-character runs in the $\BWT$ is small compared to the text length when the text is repetitive \cite{bwt}.
Mantaci et al.\ \cite{ebwt} extended the notion of the $\BWT$ to string collections: the \emph{extended $\BWT$} ($\eBWT$) of a string collection consists of the last characters of the strings in $\mathcal M$ arranged in infinite periodic order (see Section~\ref{sec:preliminaries} for definitions).


The $\BWT$ serves as the basis of several \emph{self-indexes}, i.e., data structures supporting access to the text as well as pattern matching queries.
One of the most successful such indexes is the FM-index \cite{ferragina2005indexing}, which is used e.g.\ in the important bioinformatics programs BWA \cite{li2009fast} and Bowtie \cite{langmead2012fast}.
However, the FM-index uses space proportional to the size of the dataset and is thus not applicable for huge datasets that vastly exceed the size of main memory.
Such datasets often contain data from thousands or millions of individuals from the same species and are therefore extremely repetitive.
The \emph{r-index} \cite{gagie2018optimal} is based on the $\mathsf{RLBWT}$ and is the first text index of size proportional to the number $r$ of runs in the $\BWT$.
It can be constructed in space proportional to $r$ from the $\mathsf{RLBWT}$, which raises the problem of constructing the $\mathsf{RLBWT}$ of huge datasets with small working memory.
For this reason, various tools have emerged that exploit the repetitiveness of these datasets.

The Lyndon grammar of a text is a straight-line program (SLP), i.e., a context-free grammar in Chomsky normal form that generates a single text, where each symbol corresponds to a node in the Lyndon tree.
It was introduced in \cite{lyndon_grammar} as the basis for a self-index.
In this paper, we describe a method for efficiently computing several $\BWT$ variants of a text or a text collection via the input's Lyndon grammar and show that our method surpasses other tools with respect to time or memory used, especially when using multiple threads.


\subsection{Related work}
Boucher et al.\ \cite{boucher2019prefix} developed \emph{prefix-free parsing (PFP)},
a dictionary-compression technique that results in a dictionary and parse from which the ($\mathsf{RL}$)$\BWT$ can be constructed in space proportional to their total size
and in time linear to the size of the dataset.
While the dictionary typically grows slowly as the size of the dataset increases, the parse grows linearly with the input size (with a small constant chosen at runtime).
Consequently, for extremely large and repetitive datasets, ``the parse is more burdensome than the size of the dictionary'' \cite{oliva2023recursive}.
In an effort to remedy this, Oliva et al.\ \cite{oliva2023recursive} developed \emph{recursive prefix-free parsing}, which applies PFP to the parse.
PFP has also been used to construct the $\eBWT$ \cite{boucher2021computing}.
The currently fastest implementation for computing the suffix array $\SA$ (and $\BWT$) for general texts is \libsais and is based on
the linear-time \emph{suffix array induced sorting (SAIS)} algorithm \cite{nong2009linear,olbrich2024generic}.
\ropebwt uses \libsais to compute the $\BWT$ of chunks of the input and successively merges these $\BWT$s \cite{li2024bwt}.
D{\'\i}az-Dom{\'\i}nguez and Navarro used grammar compression to maintain the intermediate data in the SAIS algorithm with low main memory usage, resulting in a linear-time semi-external algorithm that computes a variant of the $\eBWT$ \cite{cenzato2024survey,diaz2023efficient}.
Specifically for a collection of sequences where each is similar to a given reference (e.g.\ human chromosomes), Masillo introduced \cmsbwt, which computes the $\BWT$ via \emph{matching statistics} \cite{masillo:LIPIcs.ESA.2023.83}.

Lyndon words have previously been used in the construction of $\SA$ and play an important role in the original $\eBWT$.
In \cite{MANTACI2013,MANTACI20142}, a strategy for computing $\SA$ was presented where ``local suffixes'' (i.e., substrings ending at boundaries between Lyndon factors) can be processed separately in each Lyndon factor.
Later, Baier effectively generalized the underlying idea by showing that sorting the suffixes by their longest Lyndon prefixes can be used for linear time $\SA$ construction \cite{baier2016linear}.
The \emph{bijective $\BWT$} ($\BBWT$) is the $\eBWT$ of the text's Lyndon factors \cite{scott_bbwt,kufleitner2009bijective,ebwt}.
Bonomo et al.\ showed that it is possible to compute the $\eBWT$ via the $\BBWT$ and gave an $\bigO{N\log N/\log\log N}$ construction algorithm based on properties of Lyndon words \cite{bonomo2013,bonomo2014}. Later, Bannai et al.\ modified the SAIS algorithm to compute the $\BBWT$ in linear time \cite{linearbbwt}.
Since publication, Baier's algorithm was improved in terms of time and memory usage \cite{bertram2021,olbrich2022} and further generalized such that it can be used to compute $\eBWT$ and $\BBWT$ in linear time \cite{olbrich2024generic}.
The principles used this generalized algorithm lie at the base of our algorithm for computing the $\BWT$, $\BBWT$ and $\eBWT$ variants from a Lyndon grammar.

\subsection{Our contributions}
Let $N$ be the size of a text or text collection, $n_\mathrm{max}$ the maximum size of a string in the text collection, and $g$ the size of the Lyndon SLP of the text or text collection.
We give two online in-memory algorithms for constructing the Lyndon SLP with $\bigO{Ng}$ and $\bigO{N\log g + g\log^2g}$ worst-case time complexity, both using $\bigO{g}$ words of memory.
Notably, the text or text collection is streamed from right to left and does not have to reside in main memory. 
Additionally, we give an expected linear-time algorithm that uses $\bigO{g + n_\mathrm{max}}$ words of memory.
Furthermore, we give an $\bigO{g}$ algorithm for sorting the symbols of a Lyndon SLP lexicographically by their generated strings,
and an $\bigO{N}$ algorithm for constructing the (run-length compressed) $\BWT$, $\BBWT$, or $\eBWT$ from a Lyndon SLP.
We implemented our algorithms and demonstrate empirically that we can construct the $\BWT$, $\BBWT$ or $\eBWT$ of repetitive texts or text collections faster and/or with less space than competing methods.


\section{Preliminaries}
\label{sec:preliminaries}

For $i,j\in\mathbb{N}_0$ we denote the set $\set{k\in\mathbb{N}_0 : i \leq k \leq j}$ by the
interval notations $\intervCC{i}{j} = \intervCO{i}{j+1} = \intervOC{i-1}{j} = \intervOO{i-1}{j+1}$.
For an array $\mathsf{A}$ we analogously denote the \emph{subarray} from $i$ to $j$ by
$\mathsf{A}\intervCC{i}{j} = \mathsf{A}\intervCO{i}{j+1} = \mathsf{A}\intervOC{i-1}{j} = \mathsf{A}\intervOO{i-1}{j+1} = \mathsf{A}[i]\mathsf{A}[i+1]\dots \mathsf{A}[j]$.
We use zero-based indexing, i.e., the first entry of the array $\mathsf{A}$ is $\mathsf{A}[0]$.

A \emph{string} $S$ of \emph{length} $n$ over an \emph{alphabet} $\Sigma$ is a
sequence of $n$ characters from $\Sigma$. We denote the length $n$ of $S$ by
$\abs{S}$ and the $i$th symbol of $S$ by $S[i-1]$, i.e., strings are zero-indexed.
In this paper we assume any string $S$ of length $n$ to be over a totally ordered
and linearly sortable alphabet (i.e., the characters in $S$ can be sorted in
$\mathcal{O}(n)$).
Analogous to arrays we denote the \emph{substring} from $i$ to $j$ by
$S\intervCC{i}{j} = S\intervCO{i}{j+1} =
S\intervOC{i-1}{j} = S\intervOO{i-1}{j+1} = S[i]S[i+1]\dots S[j]$.
For $j > i$ we let $S\intervCC{i}{j}$ be the \emph{empty string} $\varepsilon$.
For two strings $u$ and $v$ and an integer $k\geq 0$ we let $uv$ be the
concatenation of $u$ and $v$ and denote the $k$-times concatenation of $u$
by $u^k$.
A string $S$ is \emph{primitive} if it is non-periodic, i.e., $S=w^k$ implies $w=S$ and $k=1$.
The \emph{suffix} $i$ of a string $S$ of length $n$ is the substring $S\intervCO{i}{n}$ and is denoted by $\suff{S}{i}$.
Similarly, the substring $S\intervCC{0}{i}$ is a \emph{prefix} of $S$. A suffix (prefix) is \emph{proper} if
$i>0$ ($i + 1 < n$).
For two possibly empty strings $u$ and $v$, $uv$ is a conjugate of $vu$.

We assume totally ordered alphabets. This induces a total order on strings.
Specifically, we say a string $S$ of length $n$ is \emph{lexicographically smaller} than another string $T$ of length $m$ if and only if
there is some $\ell\leq\min\set{n,m}$ such that $S\intervCO{0}{\ell} = T\intervCO{0}{\ell}$ and either $n=\ell<m$ or $\ell <\min\set{n,m}$ and $S[\ell] < T[\ell]$, and
write $S\lexlt T$ in this case.
A non-empty string $S$ is in its \emph{canonical form} if and only if it is lexicographically minimal among its conjugates. If $S$ is additionally strictly lexicographically smaller than all of its other conjugates, $S$ is a \emph{Lyndon word}.
Equivalently, $S$ is a Lyndon word if and only if $S$ is lexicographically smaller than
all its proper suffixes \cite{duval1983}.

In the following, we use $\texttt{abbabcbcabb}$ as our running example.
\begin{theorem}[Chen-Fox-Lyndon theorem \cite{chen1958free}]
	\makeatletter\def\@currentlabelname{Chen-Fox-Lyndon theorem}\makeatother
	Any non-empty string $S$ has a unique \emph{Lyndon factorization}, that is, there is
	a unique sequence of Lyndon words (\emph{Lyndon factors}) $v_1\lexgeq\dots\lexgeq v_k$ with
	$S=v_1\dots v_k$.
	\label{theorem:chen_fox_lyndon}
\end{theorem}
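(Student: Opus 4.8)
The plan is to prove the Chen--Fox--Lyndon theorem by the standard greedy/maximal-factor argument, establishing existence and uniqueness separately. First I would recall the key structural fact about Lyndon words: if $u$ and $w$ are Lyndon words with $u \lexlt w$, then $uw$ is again a Lyndon word. This follows from the characterization that a string is Lyndon iff it is strictly lexicographically smaller than all its proper suffixes: the proper suffixes of $uw$ are either $w$ itself (which is $\lexgt u$, hence $\lexgt uw$ since $uw$ is a prefix-extension comparison with $u \lexlt w$), suffixes of the form $u'w$ where $u'$ is a proper suffix of $u$ (handled since $u' \lexgt u$ as $u$ is Lyndon, and a short case analysis on whether $u'$ is a prefix of $u$), or proper suffixes of $w$ (which are $\lexgt w \lexgt uw$). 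This lemma is the engine for both halves.

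For existence, I would argue by induction on $\abs{S}$. If $S$ is itself a Lyndon word we are done with $k=1$. Otherwise, let $v_1$ be the longest prefix of $S$ that is a Lyndon word (well-defined since every single character is a Lyndon word), and write $S = v_1 T$ with $T$ nonempty. By induction $T = v_2 \dots v_k$ with $v_2 \lexgeq \dots \lexgeq v_k$ all Lyndon. It remains to show $v_1 \lexgeq v_2$. Suppose for contradiction $v_1 \lexlt v_2$; then by the concatenation lemma $v_1 v_2$ is a Lyndon word and is a strictly longer Lyndon prefix of $S$ than $v_1$, contradicting maximality. Hence $v_1 \lexgeq v_2 \lexgeq \dots \lexgeq v_k$, giving the desired factorization.

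For uniqueness, suppose $S = v_1 \dots v_k = w_1 \dots w_m$ are two non-increasing Lyndon factorizations. I would show $v_1 = w_1$ and then conclude by induction on the length. The crucial observation is that $v_1$ must be the longest Lyndon prefix of $S$: indeed, $v_1 \lexgeq v_2 \lexgeq \dots$ implies $v_1 \lexgeq v_i$ for all $i$, and using the fact that a Lyndon word is lexicographically smallest among its suffixes together with the concatenation lemma, one shows that no Lyndon prefix of $S$ can properly extend $v_1$ --- any such longer Lyndon prefix $z$ would have a proper suffix starting inside some $v_j$ that is $\lexgeq v_j \lexgeq z$-related in a way contradicting $z$ being Lyndon (the clean way is: $z$ Lyndon and $v_1$ a proper prefix of $z$ forces the suffix $z' = z\intervCC{\abs{v_1}}{\abs{z}-1}$ to satisfy $z \lexlt z'$, but $z'$ is a prefix of $v_2\dots v_k$ and $z \lexgt v_1 \lexgeq v_2$, and $v_2$ being Lyndon lets us compare). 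The same argument applied to the $w$-factorization forces $w_1$ to be the longest Lyndon prefix of $S$, so $v_1 = w_1$; cancelling and inducting finishes the proof.

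The main obstacle is the uniqueness direction, specifically the lemma that the first factor $v_1$ of any non-increasing Lyndon factorization equals the maximal Lyndon prefix of $S$. The delicate point is ruling out a Lyndon prefix $z$ with $\abs{z} > \abs{v_1}$: one has to carefully track where the boundary $\abs{v_1}$ falls relative to the $w$-factors (or equivalently analyze the suffix $z'$ of $z$ beginning at position $\abs{v_1}$) and combine the ``smaller than all proper suffixes'' characterization with the non-increasing order $v_1 \lexgeq v_2$. Once this maximality characterization is nailed down, both existence and uniqueness fall out cleanly by induction on $\abs{S}$.
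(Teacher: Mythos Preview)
The paper does not give its own proof of this statement: it is stated as the classical Chen--Fox--Lyndon theorem and attributed to \cite{chen1958free}, with no argument supplied. So there is nothing in the paper to compare your approach against.

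Your proposal is the standard proof and is correct in outline. The existence half is clean. For uniqueness, your sketch is on the right track but can be sharpened: rather than looking at the suffix of $z$ starting at position $\abs{v_1}$, it is cleaner to let $z$ end inside (or at the end of) some $v_j$ with $j\geq 2$, write $z = v_1\cdots v_{j-1}p$ where $p$ is a nonempty prefix of $v_j$, and observe that $p$ is a proper suffix of the Lyndon word $z$, so $z \lexlt p$; but $p \lexleq v_j \lexleq v_1 \lexlt z$ (the last inequality because $v_1$ is a proper prefix of $z$), a contradiction. This avoids the case split you allude to and does not actually need the concatenation lemma or that $v_2$ is Lyndon. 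With that tweak the argument is complete.
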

The Lyndon factorization of our running example is $\texttt{abbabcbc},\texttt{abb}$.

\begin{definition}[Standard Factorization \cite{chen1958free}]
	The \emph{standard factorization} of a Lyndon word $w$ of length $\abs{w}\geq 2$ is the tuple $(u,v)$
	where $w=uv$ and $v$ is the longest proper suffix of $w$ that is Lyndon.
	The standard factorization $(u,v)$ of a Lyndon word $w$ with $|w|\geq 2$ always exists and both $u$ and $v$ are Lyndon.
	\label{def:standard_factorization}
\end{definition}
The standard factorization of the first Lyndon factor of our running example is $(\texttt{abb},\texttt{abcbc})$ as $\texttt{abcbc}$ is Lyndon and no longer proper suffix is Lyndon.

\begin{lemma}[{\cite[Lemma 3.20]{olbrich2024generic}}]
	\label{lemma:lyndon_children}
	Any Lyndon word $w$ with $\abs{w}\geq 2$ is of the form $w[0]w_{c_1}\cdots w_{c_k}$,
	where $w_{c_1}\lexgeq w_{c_2}\lexgeq \dots \lexgeq w_{c_k}$ are the Lyndon factors of $\suff{w}{1}$.
\end{lemma}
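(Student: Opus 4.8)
The plan is to read the statement off directly from the Chen-Fox-Lyndon theorem applied to $\suff{w}{1}$, since the claimed shape of $w$ is nothing more than a reformulation of that theorem. First I would separate the leading character: by the definition of $\suff{w}{1}$ one has the string identity $w = w[0]\cdot\suff{w}{1}$, and since $\abs{w}\geq 2$ the suffix $\suff{w}{1}$ has length $\abs{w}-1\geq 1$ and is therefore non-empty. This non-emptiness is the only point at which the hypothesis $\abs{w}\geq 2$ is actually used.

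Next I would apply the Chen-Fox-Lyndon theorem (\cref{theorem:chen_fox_lyndon}) to the non-empty string $\suff{w}{1}$: it has a unique Lyndon factorization, i.e.\ a unique sequence of Lyndon words $w_{c_1}\lexgeq w_{c_2}\lexgeq\cdots\lexgeq w_{c_k}$ with $k\geq 1$ and $\suff{w}{1}=w_{c_1}w_{c_2}\cdots w_{c_k}$. Substituting this into $w = w[0]\cdot\suff{w}{1}$ gives $w = w[0]\,w_{c_1}w_{c_2}\cdots w_{c_k}$, where the $w_{c_i}$ are exactly the (weakly decreasing) Lyndon factors of $\suff{w}{1}$ --- which is the assertion.

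I do not expect any real obstacle here: the lemma is an immediate corollary of \cref{theorem:chen_fox_lyndon}, and in fact the Lyndon property of $w$ is not even needed for the identity itself (it only singles out the case relevant for the Lyndon tree and the Lyndon SLP). If it is convenient for the subsequent development, I would additionally connect this decomposition to the standard factorization of \cref{def:standard_factorization}: setting $u = w[0]w_{c_1}\cdots w_{c_{k-1}}$ and $v = w_{c_k}$, one verifies that every proper suffix of $w$ strictly longer than $v$ fails to be Lyndon, so that $v$ is the longest proper Lyndon suffix of $w$ and $(u,v)$ is its standard factorization; proving this, however, is more than the lemma requires.
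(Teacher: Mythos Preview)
Your argument is correct. The paper does not supply its own proof of this lemma; it merely cites it from \cite{olbrich2024generic}, so there is nothing to compare against directly. Your derivation---split off $w[0]$, apply \cref{theorem:chen_fox_lyndon} to the non-empty string $\suff{w}{1}$, and read off the decomposition---is exactly what the statement amounts to, and you are right that the Lyndon hypothesis on $w$ plays no role in the bare identity.

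One remark worth keeping in mind: the paper invokes this lemma only to deduce \cref{lemma:lyndon_children_symbol}, whose content is precisely the link between the Lyndon factors $w_{c_1},\dots,w_{c_k}$ of $\suff{w}{1}$ and the \emph{leftmost path} of the Lyndon tree, i.e.\ the iterated standard factorizations. That deduction is where the Lyndon property of $w$ actually matters, and it is exactly the extra step you sketch in your final paragraph (showing $w_{c_k}$ is the longest proper Lyndon suffix, hence $(w[0]w_{c_1}\cdots w_{c_{k-1}},\,w_{c_k})$ is the standard factorization, and then recursing). So while your proof of the lemma as stated is complete, be aware that the downstream use in the paper tacitly relies on the stronger recursive statement you identified as optional.
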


\begin{figure}
	\centering
	\foreach \c [count=\i from 0] in {a,b,b,a,b,c,b,c,a,b,b} {
		\expandafter\xdef\csname{chars\i}\endcsname{\noexpand\strut\noexpand\texttt{\c}}
	}
	\begin{tikzpicture}
		\foreach \x [count=\i from 0] in {abbabcbc,bbabcbca,babcbcab,abcbcabb,bcbcabba,cbcabbab,bcabbabc,cabbabcb,abb,bba,bab}{
			\ifthenelse{\i=0 \OR \i=8}{
				\expandafter\xdef\csname{excon\i}\endcsname{\noexpand\strut\noexpand\color{factor_col}\noexpand\texttt{\x}}
			} {
				\expandafter\xdef\csname{excon\i}\endcsname{\noexpand\strut\noexpand\texttt{\x}}
			}
		}
		\def\examplelen{11}
		
		\def\tablexoffset{1.25}
		\def\tableyoffset{-0.5}
		\node[rotate=90,text width=4.75cm,align=center] at (\tablexoffset,\tableyoffset-0.125-\examplelen*0.5*0.5) {\color{gray} Collect the conjugates of the Lyndon factors};
		\foreach \i in {0,1,2,3,4,5,6,7,8,9,10} {
			\node[anchor=north east] (conj\i) at (\tablexoffset+1,\tableyoffset-\i*0.5) {\strut\i};
			\node[anchor=north west] (conj\i) at (\tablexoffset+1,\tableyoffset-\i*0.5) {\csname{excon\i}\endcsname};
		}

		\draw[-stealth,very thick,draw=gray] (\tablexoffset+3,\tableyoffset-0.375) -- (\tablexoffset+3.5,\tableyoffset-0.375);
		\node[rotate=90,text width=4.75cm,align=center] at (\tablexoffset+3.25,\tableyoffset-0.125-\examplelen*0.5*0.5) {\color{gray} Sort them according to infinite periodic order};
		\draw[-stealth,very thick,draw=gray] (\tablexoffset+3,\tableyoffset-\examplelen*0.5+0.125) -- (\tablexoffset+3.5,\tableyoffset-\examplelen*0.5+0.125);

		\foreach [count=\i from 0] \s in {8,0,3,10,2,9,1,6,4,7,5} {
			\node[anchor=north east] (sorted\i) at (\tablexoffset+4.25,\tableyoffset-\i*0.5) {\strut\s};
			\node[anchor=north west] (sorted\i) at (\tablexoffset+4.25,\tableyoffset-\i*0.5) {\csname{excon\s}\endcsname};
		}

		\draw[-stealth,very thick,draw=gray] (\tablexoffset+6.25,\tableyoffset-0.375) -- (\tablexoffset+6.75,\tableyoffset-0.375);
		\node[rotate=90,text width=4.75cm,align=center] at (\tablexoffset+6.5,\tableyoffset-\examplelen*0.5*0.5) {\color{gray} Collect the last character of each entry};
		\draw[-stealth,very thick,draw=gray] (\tablexoffset+6.25,\tableyoffset-\examplelen*0.5+0.125) -- (\tablexoffset+6.75,\tableyoffset-\examplelen*0.5+0.125);

		\foreach [count=\i from 0] \s in {8,0,3,10,2,9,1,6,4,7,5} {
			\def\col{factor_col}
			\ifthenelse{\s=0} {
				\pgfmathtruncatemacro\ii{7}
			} {
				\ifthenelse{\s=8} {
					\pgfmathtruncatemacro\ii{10}
				} {
					\def\col{black}
					\pgfmathtruncatemacro\ii{round(\s-1)}
				}
			}
			\node[anchor=north] (bbwt\i) at (\tablexoffset+7.25,\tableyoffset-\i*0.5) {\color{\col}\csname{chars\ii}\endcsname};
		}

		\begin{scope}[xshift=2cm,yshift=2.25cm]
		\def\examplespacing{0.75}
		\foreach \s [count=\i from 0] in {8,2,3, 8,6,6,8,8, 11,10,11} {
			\node (idx\i) at (\examplespacing*\i,-0.75){\strut\i};
			\node (text\i) at (\examplespacing*\i,-1.25){\strut\csname{chars\i}\endcsname};
			\node (nss\i) at (\examplespacing*\i,-2.25){\strut\s};
		}
		\node (text11) at (\examplespacing*11,-1.25) {\strut};
		\foreach \s [count=\i from 0] in {8,2,3, 8,6,6,8,8, 11,10,11} {
			\draw[->,gray] ([yshift=0.2cm,xshift=0.1cm]text\i.south) to[out=340,in=200] ([yshift=0.2cm,xshift=-0.1cm]text\s.south);
		}
		\node[anchor=west] (idx) at (-1.2,-0.75) {\strut $i$};
		\node[anchor=west] (text) at (-1.2,-1.25) {\strut $S[i]$};
		\node[anchor=west] (pss) at (-1.2,-2.25) {\strut $\nss[i]$};
		\draw (-1, -2.65) -- (8, -2.65);
		\end{scope}
	\end{tikzpicture}
	\caption{
	Next smaller suffix array $\nss$ (top) and $\BBWT$ (bottom) of the running example $S=\texttt{abbabcbcabb}$.
	Each arrow points from $i$ to $\nss[i]$.
	Lyndon factors are coloured ({\color{factor_col}$\blacksquare$}).
	}
	\label{fig:intro_examples}
\end{figure}
Figure~\ref{fig:intro_examples} illustrates the following definitions.
\begin{definition}[next smaller suffix array, $\nss$]
	\label{def:nss}
	Let the \emph{next smaller suffix} array $\nss$ of a string $S$ be such that
	$\nss[i] = \min\set{j\in\intervOC{i}{\abs{S}} \mid \suff{S}{j} \lexlt \suff{S}{i}}$.
\end{definition}

\begin{definition}[Infinite Periodic Order]
	We write $S\omlt T$ if and only if the infinite concatenation $S^\infty = SS\dots$ is lexicographically smaller than the infinite concatenation $T^\infty=TT\dots$.
\end{definition}
For instance, $\texttt{ab}\lexlt\texttt{aba}\lexlt\texttt{abb}$ and $\texttt{abb}\omgt\texttt{ab}\omgt\texttt{aba}$ since $\texttt{abb}\dots\lexgt\texttt{abab}\dots\lexgt\texttt{abaab}\dots$.
\begin{definition}[Bijective Burrows-Wheeler Transform ($\BBWT$)]
	The \emph{bijective Burrows-Wheeler Transform ($\BBWT$)} of a string $S$ is the string obtained by taking the last character of each conjugate of the Lyndon factors of $S$ arranged in infinite periodic order.
\end{definition}
\begin{definition}[Extended Burrows-Wheeler Transform ($\eBWT$)]
	The \emph{extended Burrows-Wheeler Transform ($\eBWT$)} of a multiset $\mathcal M$ of strings is the string obtained by taking the last character of each conjugate of the strings in $\mathcal M$ arranged in infinite periodic order.
\end{definition}
Note that, by definition, the $\BBWT$ of a string $S$ is the $\eBWT$ of the Lyndon factors of $S$.
Similarly, the $\eBWT$ of a multiset $\mathcal M$ of strings is the
$\BBWT$ of the concatenation of the canonical forms of the strings in $\mathcal M$ arranged in lexicographically decreasing order \cite{linearbbwt} (the Lyndon factors of the resulting string are conjugates of the roots of the strings in $\mathcal M$).\footnote{
	The $\eBWT$ was originally defined for sets of primitive strings \cite{ebwt}. This limitation is unnecessary \cite{boucher2021computing,olbrich2024generic}.
}

For a primitive string $S$, its $\BWT$, the $\eBWT$ of $\{S\}$, and the $\BBWT$ of the smallest conjugate of $S$ are identical.
Consequently, the \emph{\$-$\BWT$} of a string $S$ (i.e., $\BWT(S\$)$, where $\$$ is smaller than all characters in $S$) commonly computed via the suffix array of $S$, is identical to $\BWT(\$S) = \BBWT(\$S) = \eBWT(\{\$S\})$.

\subsection{Lyndon Grammar}

\newcommand{\LTree}{\mathit{LTree}}
\begin{definition}[Lyndon Tree/Forest \cite{barcelo1990action,lyndon_grammar}]
	The \emph{Lyndon tree} of a Lyndon word $w$---denoted by $\LTree(w)$---is the ordered full binary tree defined recursively as follows:
	If $\abs{w}=1$, then $\LTree(w)$ consists of a single node labelled by $w$, and
	if $\abs{w}\geq 2$ and $w$ has the standard factorization $(u,v)$, the root of $\LTree(w)$ is labelled
	by $w$, the left child of the root is $\LTree(u)$, and the right child of the root is $\LTree(v)$.
	For a non-Lyndon word, we let the \emph{Lyndon Forest} be the sequence of Lyndon trees of the Lyndon factors.
	\label{def:lyndon_tree}
\end{definition}
The Lyndon Forest of the running example is shown in Figure~\ref{fig:tree_slp_example}.
Note that the Lyndon Forest of a string $S$ is closely related to the Lyndon array $\lambda$ of $S$ \cite{lyndon_grammar},
where $\lambda[i]$ is the length of the longest prefix of $\suff{S}{i}$ that is Lyndon (equivalently, $\lambda[i] = \nss[i] - i$ \cite{bille2019space}, cf.\ Figure~\ref{fig:intro_examples}).
A succinct representation of $\lambda$ occupies $2\abs{S}$ bits and can be computed in linear time directly from the text \cite{bille2019space,crochemore2020cartesian,lyndonarrayinducing}.

\begin{figure}
	\centering
	\foreach \c [count=\i from 0] in {a,b,b,a,b,c,b,c,a,b,b} {
		\expandafter\xdef\csname{chars\i}\endcsname{\noexpand\strut\noexpand\texttt{\c}}
	}
	\def\examplespacing{0.5}
	\newcommand{\treen}[4]{
		\draw let \p1 = (#2), \p2 = (#3) in node[inner sep=1pt,circle] (#1) at (\x1*0.5+\x2*0.5,{max(\y1,\y2)+0.7cm}) {\small{#4}};
		\draw[gray] (#1) -- (#2);
		\draw[gray] (#1) -- (#3)
	}
	\begin{tikzpicture}
		\begin{scope}
			\foreach \i in {0,1,2,3,4,5,6,7,8,9,10} {
				\node[inner sep=0pt,anchor=south] (c\i) at (\i*\examplespacing,0) {\strut\csname{chars\i}\endcsname};
			}
			\treen{n1}{c0}{c1}{$X_2$};
			\treen{n2}{n1}{c2}{$X_3$};
			\treen{n4}{c4}{c5}{$X_8$};
			\treen{n3}{c3}{n4}{$X_5$};
			\treen{n6}{c6}{c7}{$X_8$};
			\treen{n4}{n3}{n6}{$X_6$};
			\treen{n0}{n2}{n4}{$X_4$};
			\treen{n7}{c8}{c9}{$X_2$};
			\treen{n8}{n7}{c10}{$X_3$};
		\end{scope}
		\begin{scope}[xshift=6cm]
			\node[anchor=south west] (gr) at (0,0) {$\begin{aligned}
				X_1 &\to \texttt{a} & \quad X_6 &\to X_5X_8 \\
				X_2 &\to X_1X_7     &       X_7 &\to \texttt{b} \\
				X_3 &\to X_2X_7     &       X_8 &\to X_7X_9 \\
				X_4 &\to X_3X_6     &       X_9 &\to \texttt{c} \\
				X_5 &\to X_1X_8
			\end{aligned}$};
			\node[above=of gr,yshift=-1cm] {roots: $X_4$, $X_3$};
		\end{scope}
	\end{tikzpicture}
	\caption{
		Lyndon Forest (left) and Lyndon SLP (right) for the running example $S=\texttt{abbabcbcabb}$.
		For clarity, instead of the node labels the corresponding grammar symbols are shown.
		Note the structural similarity between the Lyndon forest and the arrows indicating $\nss$ in Figure~\ref{fig:intro_examples}.
	}
	\label{fig:tree_slp_example}
\end{figure}

\begin{definition}[Lyndon straight-line program \cite{lyndon_grammar}]
	\label{def:lyndon_slp}
	A \emph{Lyndon straight-line program} (SLP) is a context-free grammar $\mathcal G$ over an alphabet $\Sigma$ in Chomsky normal form, where $\val{X_i}$ is a Lyndon word for each symbol $X_i$ and each rule $X_i\to X_aX_b$ is such that the standard factorization of $\val{X_i}$ is $(\val{X_a},\val{X_b})$, where $\val{X_i}$ denotes the unique string generated by $X_i$.
	A symbol $X_i$ is called \emph{terminal symbol} if there is a rule of the form $X_i\to c$ ($c\in\Sigma$), and \emph{non-terminal symbol} otherwise.
	When $r_1,\dots,r_k$ are the root symbols of $\mathcal G$,\footnote{In slight deviation of the usual definition of SLPs, we allow multiple root/start symbols. The reason for this is that we want to represent arbitrary strings with the Lyndon SLP, not just Lyndon words.} $\mathcal G$ generates $\val{\mathcal G}=\val{r_1}\dots\val{r_k}$, and $\val{r_1}\geq\dots\geq\val{r_k}$ is the Lyndon factorization of $\val{\mathcal G}$.
	The \emph{size} $\abs{\mathcal G}$ of $\mathcal G$ is the number of production rules plus the number of start symbols. 
	The \emph{derivation tree} of the SLP $\mathcal G$ is a labelled ordered tree where the root node has the children $r_1,\dots,r_k$.
	We assume that distinct symbols generate distinct words.
\end{definition}
A Lyndon SLP arises from renaming the nodes of a Lyndon Forest such that two nodes with isomorphic subtrees (i.e., representing the same string)
are assigned the same symbol.
The Lyndon Forest and Lyndon SLP of our running example can be seen in Figure~\ref{fig:tree_slp_example}.

Clearly, the size of a Lyndon SLP $\mathcal G$ is bounded by $\bigO{\abs{\val{\mathcal G}}}$.
This bound is tight and there is no non-trivial bound in terms of e.g.\ the number of runs in the $\BWT$, the size $\gamma$ of the smallest string attractor \cite{kempa2018stringattractors} or the size of the smallest SLP:
consider $S=\texttt a^{n-1}\texttt b$ for $n>1$.
The smallest SLP generating $S$ has size $\bigO{\log n}$, its $\BWT$ has two runs, and the smallest string attractor has size $\gamma=\abs{\set{0,n-1}}=2$,
but every suffix of $S$ is Lyndon and thus the Lyndon SLP generating $S$ has $\Theta\bra{n}$ symbols.

In the following, we assume the set of symbols $V$ of an SLP $\mathcal G$ to be numbered consecutively, i.e., $V=\set{X_1,\dots,X_{\abs V}}$.

\section{Properties of sorted Lyndon Grammars}
\label{sec:lyndon_grammar_properties}

Consider a Lyndon grammar $\mathcal{G}$ where the symbols are sorted lexicographically,
i.e., $\val{X_i} \lexlt \val{X_j} \iff i < j$.
In this section we examine such Lyndon SLPs and find properties that enable sorting any Lyndon SLP lexicographically in linear time (Section~\ref{sec:sorting})
and give rise to an online algorithm for constructing Lyndon SLPs (Section~\ref{sec:btree_construction}).
Note that the grammar shown in Figure~\ref{fig:tree_slp_example} is lexicographically sorted.

\newcommand{\leftmost}{\mathcal{P}_L}
\newcommand{\rightmost}{\mathcal{P}_R}
\begin{definition}[Prefix- and Suffix-symbols]
	Define $\leftmost\bra{X_i}$ ($\rightmost\bra{X_i}$) as the set of symbols occurring on the leftmost (rightmost) path of $X_i$'s derivation tree.
	More formally, if $X_i$ is a terminal symbol define $\leftmost(X_i) = \rightmost\bra{X_i} = \set{X_i}$, and if $X_i\to X_aX_b$ is a non-terminal symbol define $\leftmost(X_i) = \set{X_i} \cup \leftmost(X_a)$ and $\rightmost\bra{X_i}=\set{X_i}\cup\rightmost\bra{X_b}$.
	We correspondingly define $\leftmost^{-1}(X_i)$ ($\rightmost^{-1}\bra{X_i}$) to be the set of symbols on whose leftmost (rightmost) path of the derivation tree $X_i$ occurs, i.e.\ $\leftmost^{-1}(X_i) = \set{X_j \mid X_i\in\leftmost\bra{X_j}}$ and $\rightmost^{-1}\bra{X_i} = \set{X_j \mid X_i\in\rightmost\bra{X_j}}$.
\end{definition}
For example, we have $\leftmost\bra{X_6} = \set{X_6,X_5,X_1}$, $\rightmost\bra{X_6} = \set{X_6,X_8,X_9}$ and $\leftmost^{-1}\bra{X_1} = \set{X_1,X_2,X_3,X_4,X_5,X_6}$ for the grammar shown in Figure~\ref{fig:tree_slp_example}.
\newcommand{\dset}{\mathcal{D}}
\begin{definition}
	Define $\dset(X_i)$ as the set of symbols occurring in $X_i$'s derivation tree.
	More formally, if $X_i$ is a terminal symbol define $\dset(X_i) = \set{X_i}$,
	and if $X_i\to X_aX_b$ is a non-terminal symbol define $\dset(X_i) = \set{X_i} \cup \dset(X_a) \cup \dset(X_b)$.
\end{definition}

Let $r(X_i)$ be the largest symbol whose derivation tree has $X_i$ on the leftmost path, i.e., $r(X_i) = \max\bra{\leftmost^{-1}\bra{X_i}}$.
Similarly, let $l(X_i)$ be the smallest symbol whose derivation tree has $X_i$ on the leftmost path, i.e.\ $l(X_i) = \min\bra{\leftmost^{-1}\bra{X_i}}$.
For example, $l\bra{X_1} = X_1$ and $r\bra{X_1} = X_6$ in the grammar from Figure~\ref{fig:tree_slp_example}.
In the remainder of this section, we show that all symbols $X_j$ in the interval $\intervCC{l(X_i)}{r(X_i)}$  generate strings with $\val{X_i}$ as prefix and even satisfy $X_j\in\leftmost\bra{X_i}$. This then implies that these intervals form a tree-structure.

\begin{lemma}
$l(X_i) = X_i$ for all $i$.
\end{lemma}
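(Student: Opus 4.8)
The plan is to prove the two inequalities $l(X_i)\le X_i$ and $l(X_i)\ge X_i$ separately. The first is immediate from the definitions: we always have $X_i\in\leftmost\bra{X_i}$ (this is the base case when $X_i$ is a terminal symbol, and it holds because of the ``$\set{X_i}\cup\dots$'' in the non-terminal case), so $X_i\in\leftmost^{-1}\bra{X_i}$ and hence $l(X_i)=\min\bra{\leftmost^{-1}\bra{X_i}}\le X_i$.

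For the reverse inequality, the key observation is that lying on the leftmost path forces the prefix relation on generated strings. I would first show, by induction on the derivation tree, that $X_i\in\leftmost\bra{X_j}$ implies that $\val{X_i}$ is a prefix of $\val{X_j}$: the case $X_i=X_j$ is trivial, and if $X_j\to X_aX_b$ with $X_i\in\leftmost\bra{X_a}$, then $\val{X_i}$ is a prefix of $\val{X_a}$ by the inductive hypothesis and $\val{X_a}$ is a prefix of $\val{X_j}=\val{X_a}\val{X_b}$, so $\val{X_i}$ is a prefix of $\val{X_j}$.

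Finally, since $\mathcal G$ is sorted lexicographically and distinct symbols generate distinct words, a symbol $X_j$ with $\val{X_i}$ a prefix of $\val{X_j}$ must satisfy $i\le j$: if the prefix is proper then $\val{X_i}\lexlt\val{X_j}$ and thus $i<j$, and otherwise $\val{X_i}=\val{X_j}$ forces $X_i=X_j$. Hence every element of $\leftmost^{-1}\bra{X_i}$ is at least $X_i$, giving $l(X_i)\ge X_i$; combining the two bounds yields $l(X_i)=X_i$. I do not anticipate a real obstacle here: the statement is essentially a repackaging of the definitions together with the monotonicity of the sorting, and the only point requiring a little care is the improper-prefix case, which is handled by the standing assumption that distinct symbols generate distinct words.
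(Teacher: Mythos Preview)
Your proposal is correct and follows essentially the same approach as the paper: the paper's proof is the one-liner ``For $X_j\in\leftmost^{-1}\bra{X_i}$ we have $\val{X_i}\lexleq\val{X_j}$ as $\val{X_i}$ is prefix of $\val{X_j}$ by definition,'' which is exactly your argument with the prefix-by-induction step and the trivial direction $X_i\in\leftmost^{-1}\bra{X_i}$ left implicit. Your version is simply more detailed, in particular spelling out the improper-prefix case via the distinct-words assumption.
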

\begin{proof}
\label{lemma:left_bound}
For $X_j\in\leftmost^{-1}\bra{X_i}$ we have $\val{X_i}\lexleq\val{X_j}$ as $\val{X_i}$ is prefix of $\val{X_j}$ by definition.
\end{proof}

The following lemmas follow from basic properties of Lyndon words and the definition of the Lyndon grammar/tree.
\begin{restatable}{lemma}{lemmalyndonsuffix}
	\label{lemma:lyndon_suffix}
	Let $w = uv$ be a Lyndon word with standard factorization $(u,v)$. Then $\suff{u}{k}v=\suff{w}{k} \lexgt v$ holds for all $k\in[1..\abs{u})$.
\end{restatable}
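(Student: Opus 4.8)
The plan is to prove the slightly stronger and cleaner statement that $v$ is the \emph{lexicographically smallest proper suffix of $w$}; the lemma then follows at once. Note first that the equality $\suff{u}{k}v = \suff{w}{k}$ for $k\in[1..\abs u)$ is immediate from $w=uv$, so the entire content of the lemma is the inequality $\suff{w}{k}\lexgt v$.

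To establish the claim, let $t$ denote the lexicographically smallest proper suffix of $w$. I would first argue that $t$ is a Lyndon word: if it were not, then by Duval's characterization (a non-empty string is Lyndon iff it is lexicographically smaller than all its proper suffixes) $t$ would have a proper suffix $t''$ with $t\not\lexlt t''$, hence $t''\lexleq t$; since two suffixes of a fixed string of distinct lengths are distinct, this gives $t''\lexlt t$, and $t''$ is itself a proper suffix of $w$ (as $\abs{t''}<\abs t<\abs w$), contradicting the minimality of $t$. Thus $t$ is a proper Lyndon suffix of $w$. Now, by Definition~\ref{def:standard_factorization}, $v$ is the \emph{longest} proper Lyndon suffix of $w$, so $\abs t\le\abs v$. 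If $\abs t<\abs v$, then $t$ is a proper suffix of the Lyndon word $v$, whence $v\lexlt t$; but $v$ is a proper suffix of $w$ and $t$ is the minimum among those, so $t\lexleq v$, a contradiction. Therefore $\abs t=\abs v$, and since there is exactly one suffix of $w$ of each length, $t=v$.

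Finally, for any $k\in[1..\abs u)$ the suffix $\suff{w}{k}$ has length $\abs w-k>\abs w-\abs u=\abs v$, so it is a proper suffix of $w$ distinct from $v$; by the claim it is therefore strictly larger, i.e.\ $\suff{w}{k}\lexgt v$, which together with the equality above completes the proof.

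I expect no genuine combinatorial obstacle here once one commits to proving the ``smallest proper suffix'' formulation rather than attacking the range $[1..\abs u)$ directly; the only points requiring care are the repeated (routine) use of the fact that two suffixes of a fixed string coincide iff they have equal length, and keeping the direction of the inequalities straight when combining ``$v$ is the longest proper Lyndon suffix'' with ``$t$ is the smallest proper suffix''.
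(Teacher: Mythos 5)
Your proof is correct. It takes a genuinely different (cleaner) organizational route from the paper while relying on the same two core facts: (i) the lexicographically smallest suffix of any string is a Lyndon word, and (ii) $v$ is by definition the \emph{longest} proper Lyndon suffix of $w$. You first establish the slightly stronger and more reusable claim that $v$ is the lexicographically \emph{smallest} proper suffix of $w$, and then the lemma drops out immediately since $\suff{w}{k}$ for $k\in[1..\abs u)$ is a proper suffix of length exceeding $\abs v$, hence distinct from $v$ and therefore strictly larger. The paper instead argues by direct contradiction within the stated range: assuming $\suff{w}{k}\lexlt v$ for some $k\in[1..\abs u)$, it passes to the lexicographically smallest suffix $\suff{w}{p}$ of $\suff{w}{k}$, shows $k\le p<\abs u$ (using that $v$ is Lyndon to rule out $p>\abs u$ and the assumption to rule out $p=\abs u$), and concludes that $\suff{w}{p}$ is a Lyndon suffix longer than $v$, contradicting (ii). Your version has the advantage of isolating, as an explicit intermediate claim, the well-known fact that the right factor of the standard factorization is the lex-smallest proper suffix; the paper's version stays entirely local to the inequality it needs. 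Neither route is more demanding than the other, and both are complete.
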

\begin{proof}
	Assume $\suff{w}{k} \lexlt v=\suff{w}{\abs{u}}$ with $k\in\intervCO{1}{\abs{u}}$ for a contradiction.
	Consider the lexicographically smallest (not necessarily proper) suffix $\suff{w}{p}$ of $\suff{w}{k}$ (i.e., $k\leq p$).
	Since $\suff{w}{\abs u} = v$ is Lyndon, $\suff{w}{\abs u} \lexlt \suff{w}{\abs u + \ell}$ holds for $1\leq\ell<\abs{v}$ and we thus have $k \leq p \leq \abs{u}$.
	Since $\suff{w}{k} \lexlt v = \suff{w}{\abs{u}}$ holds by assumption, we additionally have $p \neq \abs{u}$ and consequently $k \leq p < \abs{u}$.
	Because $\suff{w}{p}$ is the lexicographically smallest suffix of $\suff{w}{k}$, it is also Lyndon.
	This is a contradiction because $v$ is by definition the longest Lyndon suffix of $w$.
\end{proof}

\begin{restatable}{lemma}{lemmacontiguous}
	\label{lemma:contiguous}
	For all $X_j\in \intervCC{l\bra{X_i}}{r\bra{X_i}}$ it holds $X_i \in \leftmost\bra{X_j}$, i.e., $X_i$ occurs on the leftmost path of $X_j$'s derivation tree.
	Equivalently, $\intervCC{l\bra{X_i}}{r\bra{X_i}} = \leftmost^{-1}\bra{X_i}$.
\end{restatable}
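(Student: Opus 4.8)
The plan is to reduce the lemma to the assertion that $\val{X_i}$ is a prefix of $\val{X_j}$, and then to rule out the only remaining obstruction using the structure of Lyndon trees together with a lexicographic comparison.

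\emph{Step 1 (reduction to a prefix statement).} Every $X_j\in\intervCC{l(X_i)}{r(X_i)}$ has $\val{X_i}$ as a prefix of $\val{X_j}$: since $X_i$ occurs on the leftmost path of $X_{r(X_i)}$'s derivation tree, $\val{X_i}$ is a prefix of $\val{X_{r(X_i)}}$; if $\val{X_i}$ were not a prefix of $\val{X_j}$, then, using $\val{X_i}\lexleq\val{X_j}$, the two words would first disagree at some position $q<\abs{\val{X_i}}$ with $\val{X_i}[q]<\val{X_j}[q]$, and since $\val{X_{r(X_i)}}$ agrees with $\val{X_i}$ on $\intervCO{0}{\abs{\val{X_i}}}$ this would force $\val{X_{r(X_i)}}\lexlt\val{X_j}$, contradicting $j\leq r(X_i)$. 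The cases $\val{X_i}=\val{X_j}$ and $\abs{\val{X_i}}=1$ (where $\val{X_i}=\val{X_j}[0]$ labels the leftmost leaf of $X_j$'s derivation tree, hence $X_i\in\leftmost(X_j)$) are immediate, so from now on I assume $\val{X_i}$ is a \emph{proper} prefix of $\val{X_j}$ with $\abs{\val{X_i}}\geq2$; write $X_i\to X_pX_q$ and $\val{X_j}=\val{X_i}\,w$ with $w\neq\varepsilon$.

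\emph{Step 2 (a membership criterion for $\leftmost^{-1}$).} Iterating Lemma~\ref{lemma:lyndon_children} and using uniqueness of the Lyndon factorisation, the symbols on the leftmost path of a symbol $X$ are exactly those labelling the prefixes $\val{X}[0]\,\ell_1\cdots\ell_b$ for $0\leq b\leq k$, where $\ell_1\lexgeq\cdots\lexgeq\ell_k$ is the Lyndon factorisation of $\suff{\val{X}}{1}$; when $X=X_i$ the last factor satisfies $\ell_k=\val{X_q}$, the longest proper Lyndon suffix of $\val{X_i}$. I would phrase this as the criterion: a symbol $X_m$ with $\val{X_i}$ a proper prefix of $\val{X_m}=\val{X_i}\,w_m$ ($w_m\neq\varepsilon$) lies in $\leftmost^{-1}(X_i)$ if and only if the first Lyndon factor of $w_m$ is $\lexleq\val{X_q}$; intuitively, appending $w_m$ to $\val{X_i}$ must not cause the last Lyndon factor of $\suff{\val{X_i}}{1}$ to merge with a following one.

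\emph{Step 3 (ruling out the ``straddle'' case).} This is the main obstacle. Suppose, for contradiction, that $X_i\notin\leftmost(X_j)$; then the criterion forces the first Lyndon factor $\rho_j$ of $w$ to satisfy $\rho_j\lexgt\val{X_q}$. Now $X_{r(X_i)}\in\leftmost^{-1}(X_i)$, and $\val{X_{r(X_i)}}\neq\val{X_i}$ (otherwise $j\leq r(X_i)=i$ would give $\val{X_j}=\val{X_i}$, contradicting properness), so writing $\val{X_{r(X_i)}}=\val{X_i}\,w_r$ with $w_r\neq\varepsilon$, the criterion gives that the first Lyndon factor $\rho_r$ of $w_r$ satisfies $\rho_r\lexleq\val{X_q}\lexlt\rho_j$. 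Finally I invoke two standard facts about Lyndon words: a non-empty string is lexicographically strictly smaller than the infinite repetition of its first Lyndon factor (so $w_r\lexlt\rho_r^\infty$), and for Lyndon words $\rho_r\lexlt\rho_j$ one has $\rho_r^\infty\lexlt\rho_j$ (proved by a suffix comparison of the same flavour as Lemma~\ref{lemma:lyndon_suffix}). As $\rho_j$ is a prefix of $w$, chaining yields $w_r\lexlt\rho_r^\infty\lexlt\rho_j\lexleq w$, whence $\val{X_{r(X_i)}}=\val{X_i}\,w_r\lexlt\val{X_i}\,w=\val{X_j}$, i.e.\ $r(X_i)<j$ --- contradicting $j\leq r(X_i)$. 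Hence $X_i\in\leftmost(X_j)$. The points I expect to require the most care are the exact equivalence in Step~2 (ruling out a ``backward cascade'' of merges, which is where uniqueness of the Lyndon factorisation is essential) and a clean proof of the second Lyndon-word fact used in Step~3.
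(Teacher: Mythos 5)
Your proof is correct, but it takes a genuinely different route to the contradiction than the paper's. Both arguments open the same way: $\val{X_i}$ must be a prefix of $\val{X_j}$, the cases of equality or $\abs{\val{X_i}}=1$ are immediate, and one assumes $X_i\notin\leftmost(X_j)$ to reach a contradiction. The paper then argues \emph{locally}: it picks the ``straddling'' node $X_r\to X_aX_b$ on the leftmost path of $X_j$ with $\abs{\val{X_a}}<\abs{\val{X_i}}<\abs{\val{X_r}}$, decomposes $\val{X_r}=wuv$ with $wu=\val{X_i}$ and $uv=\val{X_b}$, and uses Lemma~\ref{lemma:lyndon_suffix} plus the Lyndon property of $\val{X_b}$ to get $v\lexgt\val{X_c}$ (where $r(X_i)\to X_iX_c$), hence $\val{X_r}\lexgt\val{r(X_i)}$, contradicting $\val{X_r}\lexleq\val{X_j}\lexleq\val{r(X_i)}$. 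You instead distil from Lemma~\ref{lemma:lyndon_children} (equivalently Lemma~\ref{lemma:lyndon_children_symbol}) a \emph{global} if-and-only-if criterion for $\leftmost^{-1}(X_i)$ in terms of the first Lyndon factor of the overhang past $\val{X_i}$, apply it in both directions---to $X_{r(X_i)}$ (which is in) and to $X_j$ (assumed not in)---to obtain $\rho_r\lexleq\val{X_q}\lexlt\rho_j$, and then chain $w_r\lexlt\rho_r^\infty\lexlt\rho_j\lexleq w$ to force $\val{X_{r(X_i)}}\lexlt\val{X_j}$, i.e.\ $r(X_i)<j$. Both reach the same contradiction. Your route requires the two extra Lyndon facts you flag, and they do check out: the first reduces to the second by induction over the Lyndon factors, and the second follows by comparing the $\abs{\rho_r}$-shifted suffix of $\rho_j$ against $\rho_j$ itself, exactly in the spirit of Lemma~\ref{lemma:lyndon_suffix}. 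The paper's argument is a bit more self-contained given Lemma~\ref{lemma:lyndon_suffix}; in exchange your Step~2 is a sharpened, reusable characterisation of $\leftmost^{-1}(X_i)$ (equivalently, of edges in the first-symbol forest) that the paper only gives implicitly.
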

\begin{proof}
	Assume for a contradiction that there is some $X_j$ with $\val{X_i} \lexlt \val{X_j} \lexleq \val{r(X_i)}$ where $X_i\notin \leftmost(X_j)$.
	Because of simple properties of the lexicographical order and since $\val{r(X_i)}$ has $\val{X_i}$ as proper prefix, $\val{X_j}$ must also have $\val{X_i}$ as proper prefix.

	Therefore, there must be a symbol $X_r \to X_aX_b \in \leftmost(X_j)$ such that $\abs{\val{X_a}} < \abs{\val{X_i}} < \abs{\val{X_r}} \leq \abs{\val{X_j}}$,
	otherwise there would be a symbol generating $\val{X_i}$ in $\leftmost(X_j)$.

	Note that $\val{X_r} \lexleq \val{X_j} \lexlt \val{r(X_i)}$ because $\val{X_r}$ is a prefix of $\val{X_j}$,
	and $\val{X_i} \lexlt \val{X_r}$ because $\val{X_i}$ is a prefix of $\val{X_r}$ by definition.
	Also note that $r(X_i)$ is of the form $r(X_i)\to X_iX_c$ for some $c$.
	We now show the claim by deducing $\val{r(X_i)} \lexlt \val{X_r}$, which is a contradiction.

	Because $\val{X_i}$ is a proper prefix of $\val{X_r}$ and $\val{X_a}$ is a proper prefix of $\val{X_i}$, we can decompose $\val{X_r}$ into $wuv$ with
	$w = \val{X_a}, u = \suff{\val{X_i}}{\abs{\val{X_a}}}$ and $v = \suff{\val{X_b}}{\abs{u}}$.
	Clearly, $wu = \val{X_i}$ and $uv = \val{X_b}$.
	Because of Lemma~\ref{lemma:lyndon_suffix}, the standard factorization $(\val{X_i}, \val{X_c})$ of $\val{r(X_i)}$ implies $u\lexgt\val{X_c}$.
	From $uv = \val{X_b}$ we also have $v \lexgt u$ and thus $v\lexgt \val{X_c}$.
	This finally implies $\val{X_r} = wuv \lexgt wu\val{X_c} = \val{X_i}\val{X_c} = \val{r(X_i)}$.
\end{proof}
\begin{corollary}
	Any two intervals $[l(X_i),r(X_i)]$ and $[l(X_j),r(X_j)]$ either do not intersect or one is fully contained within the other.
	\label{cor:intervals_intersectionfree}
\end{corollary}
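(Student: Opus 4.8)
The plan is to build entirely on the two facts just established. Since $l(X_i)=X_i$ for all $i$, every interval under consideration has the shape $[X_i, r(X_i)]$, and by Lemma~\ref{lemma:contiguous} every symbol lying in $[X_i, r(X_i)]$ has $X_i$ on the leftmost path of its derivation tree, i.e.\ $X_i\in\leftmost(X_j)$ for all such $X_j$. The only additional ingredient I need is the \emph{transitivity} of the ``lies on the leftmost path'' relation: if $X_i\in\leftmost(X_j)$ and $X_j\in\leftmost(X_k)$, then $X_i\in\leftmost(X_k)$. I would obtain this directly from the recursive definition of $\leftmost$, observing that $X_i\in\leftmost(X_j)$ forces $\leftmost(X_i)\subseteq\leftmost(X_j)$ by unfolding the definition of $\leftmost$ step by step along the leftmost path from $X_j$ down to $X_i$.

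Given two such intervals, I would assume without loss of generality $i\le j$; the case $i=j$ is trivial, so take $i<j$. If the intervals are disjoint there is nothing to prove, so assume they intersect. Because each interval starts at its left endpoint and $X_i\le X_j$, intersection forces $X_j\le r(X_i)$, hence $X_j\in[l(X_i), r(X_i)]$, and Lemma~\ref{lemma:contiguous} yields $X_i\in\leftmost(X_j)$.

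It then remains to show $r(X_j)\le r(X_i)$; together with $l(X_j)=X_j\ge X_i=l(X_i)$ this gives $[l(X_j),r(X_j)]\subseteq[l(X_i),r(X_i)]$, which is the claim. For this I would consider the symbol $X_m=r(X_j)$. By definition of $r$ we have $X_j\in\leftmost(X_m)$, and combining this with $X_i\in\leftmost(X_j)$ via transitivity gives $X_i\in\leftmost(X_m)$, i.e.\ $X_m\in\leftmost^{-1}(X_i)$. Hence $r(X_j)=X_m\le\max\bra{\leftmost^{-1}\bra{X_i}}=r(X_i)$.

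I do not anticipate a genuine obstacle: the whole argument is a short chaining of Lemma~\ref{lemma:contiguous} with the definition of $r$, and the only step that really needs to be spelled out---though it is routine---is the transitivity of the leftmost-path relation, since that is precisely what allows the two previous lemmas to be composed.
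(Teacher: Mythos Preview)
Your proof is correct and follows essentially the same approach as the paper: assume $l(X_i)<l(X_j)\le r(X_i)$, apply Lemma~\ref{lemma:contiguous} to get $X_i\in\leftmost(X_j)$, and then use transitivity of the leftmost-path relation to deduce $\leftmost^{-1}(X_j)\subseteq\leftmost^{-1}(X_i)$ and hence $r(X_j)\le r(X_i)$. You spell out the transitivity step more explicitly (working with $X_m=r(X_j)$ directly), but the argument is the same.
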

\begin{proof}
	Let $X_i$ and $X_j$ be different such that $l(X_i) < l(X_j) \leq r(X_i)$ (Lemma~\ref{lemma:left_bound} implies $l(X_i)\neq l(X_j)$).
	This implies $X_j\in\leftmost\bra{X_i}$ by Lemma~\ref{lemma:contiguous} and thus $\leftmost\bra{X_j}\subset\leftmost\bra{X_i}$ and $r(X_j) \leq r(X_i)$.
\end{proof}

\begin{figure}
	\centering
	\begin{tikzpicture}
		\newcommand{\tbox}[4]{ 
			\node[rectangle,draw,minimum width=#3*0.7cm,minimum height=0.45cm,anchor=north west,text width=#3*0.7cm-0.2cm,inner sep=1pt] at (#1*0.7cm,#2*-0.45cm) {\small{#4}}
		}
		\tbox{1}{0}{6}{$X_1$};
		\tbox{7}{0}{2}{$X_7$};
		\tbox{9}{0}{1}{$X_9$};
		\tbox{2}{1}{3}{$X_2$};
		\tbox{5}{1}{2}{$X_5$};
		\tbox{8}{1}{1}{$X_8$};
		\tbox{3}{2}{2}{$X_3$};
		\tbox{6}{2}{1}{$X_6$};
		\tbox{4}{3}{1}{$X_4$};

		\node[anchor=north west,inner sep=1pt] (gr) at (7.5,0) {\small$\begin{aligned}
			X_1 &\to \texttt{a} \quad& X_7  &\to \texttt{b} \quad & X_9 &\to \texttt{c} \\
			X_2 &\to X_1X_7          & X_5  &\to X_1X_8           & X_8 &\to X_7X_9 \\
			X_3 &\to X_2X_7          & X_6  &\to X_5X_8 \\
			X_4 &\to X_3X_6
		\end{aligned}$};
	\end{tikzpicture}
	\caption{First-symbol forest of the (lex.\ sorted) Lyndon SLP of the running example (cf.\ Figure~\ref{fig:tree_slp_example}).}
	\label{fig:interval_forest}
\end{figure}

\begin{corollary}
	\label{cor:symbol_tree}
	The intervals $[l(X_i),r(X_i)]$ induce a forest.
\end{corollary}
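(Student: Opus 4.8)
The plan is to show that the family $\mathcal I=\{[l(X_i),r(X_i)] : 1\le i\le\abs{\mathcal G}\}$ is a laminar family of pairwise distinct intervals, and then to extract the forest structure in the standard way: order the family by inclusion and let the parent of an interval be the smallest interval properly containing it. First I would record that the intervals are pairwise distinct: by Lemma~\ref{lemma:left_bound} the interval associated to $X_i$ has left endpoint exactly $X_i$, so distinct symbols yield intervals with distinct left endpoints, hence distinct intervals. Together with Corollary~\ref{cor:intervals_intersectionfree}, which says that any two of these intervals are either disjoint or one is contained in the other, this makes $(\mathcal I,\subseteq)$ a laminar poset in which no two elements coincide.

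Next I would make the forest structure explicit. For each $X_i$, define its parent to be the inclusion-minimal interval of $\mathcal I$ that \emph{properly} contains $[l(X_i),r(X_i)]$, and declare $X_i$ a root when no such interval exists. The point to check is that this parent, when it exists, is uniquely determined: if $[l(X_j),r(X_j)]$ and $[l(X_k),r(X_k)]$ both properly contain $[l(X_i),r(X_i)]$, then they share a common point (e.g.\ $X_i$), so by Corollary~\ref{cor:intervals_intersectionfree} one of them contains the other. Hence the set of intervals of $\mathcal I$ properly containing $[l(X_i),r(X_i)]$ is totally ordered by inclusion; being finite and nonempty in this case, it has a unique inclusion-minimal element, which we take as the parent.

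Finally I would verify that iterating the parent map terminates without cycles. Passing from a child to its parent strictly enlarges the interval, hence strictly increases its length; so starting from any $X_i$ one reaches a root after finitely many steps, and no interval can be a strict ancestor of itself. Consequently every connected component of this parent relation is a tree rooted at the inclusion-maximal interval it contains, and $\mathcal I$ induces a forest. The only genuinely nontrivial ingredient is the total-orderedness of the chain of intervals containing a fixed one, which is exactly Corollary~\ref{cor:intervals_intersectionfree}; everything else is routine laminar-family bookkeeping, which I would keep brief.
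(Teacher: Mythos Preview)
Your proposal is correct and follows essentially the same approach as the paper: the paper states this as an immediate corollary of Corollary~\ref{cor:intervals_intersectionfree} without giving any further proof, and you have simply spelled out the standard laminar-family argument (distinct left endpoints via Lemma~\ref{lemma:left_bound}, nesting via Corollary~\ref{cor:intervals_intersectionfree}, parent as the inclusion-minimal proper superinterval). There is nothing to add or correct.
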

The following definition defines this forest. An example can be seen in Figure~\ref{fig:interval_forest}.
\begin{definition}
	We say an interval $[l(X_j),r(X_j)]$ is \emph{embedded} in an interval $[l(X_i),r(X_i)]$ if it is a subinterval of $[l(X_i),r(X_i)]$,
	i.e.\ $l(X_i) < l(X_j) \leq r(X_j) \leq r(X_i)$.
	If $[l(X_j),r(X_j)]$ is embedded in $[l(X_i),r(X_i)]$ and there is no interval embedded in $[l(X_i),r(X_i)]$ in which $[l(X_j),r(X_j)]$ is embedded,
	$[l(X_j),r(X_j)]$ is a \emph{child interval} of $[l(X_i),r(X_i)]$.

	The \emph{first-symbol forest} is a rooted forest, where each symbol $X_i$ of the SLP corresponds to a node $u_i$, and $u_j$ is a child of $u_i$ if and only if $[l(X_j),r(X_j)]$ is a child interval of $[l(X_i),r(X_i)]$.
	Note that the parent of $X_i\to X_aX_b$ is $X_a$.
	Also, the terminal symbols correspond to the roots of the trees in the forest.
\end{definition}

\begin{lemma}
	Let $X_i$ be a symbol of a Lyndon SLP.
	We have $
		\leftmost(X_i) = \set{ X_{\ell_0}, \dots, X_{\ell_k} }
	$
	where $X_{\ell_j} \to X_{\ell_{j-1}}X_{c_j}$ for all $j\in\intervCC{1}{k}$,
	$X_{\ell_0}$ is a terminal symbol, $X_{\ell_k} = X_i$, and $\val{X_{c_1}}\geq\dots\geq \val{X_{c_k}}$ is the Lyndon factorization of $\suff{\val{X_i}}{1}$.
	\label{lemma:lyndon_children_symbol}
\end{lemma}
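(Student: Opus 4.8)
The plan is to read the symbols $X_{c_j}$ directly off the leftmost path of $X_i$'s derivation tree and then verify, in this order, that (i) $\val{X_{c_1}}\cdots\val{X_{c_k}}=\suff{\val{X_i}}{1}$, (ii) every $\val{X_{c_j}}$ is Lyndon, and (iii) the sequence $\val{X_{c_1}},\dots,\val{X_{c_k}}$ is non-increasing. By the Chen--Fox--Lyndon theorem these three facts together say exactly that $\val{X_{c_1}}\geq\dots\geq\val{X_{c_k}}$ is the Lyndon factorization of $\suff{\val{X_i}}{1}$; in effect this is the SLP-level refinement of Lemma~\ref{lemma:lyndon_children}.

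First I would dispatch the structural part. By the definition of $\leftmost$ and since the grammar is in Chomsky normal form, $\leftmost(X_i)$ is precisely the chain of symbols reached from $X_i$ by repeatedly descending to the left child; listing this chain from the bottom leaf upward yields symbols $X_{\ell_0},\dots,X_{\ell_k}$ with $X_{\ell_0}$ a terminal symbol, $X_{\ell_k}=X_i$, and---setting $X_{c_j}$ to be the right child of $X_{\ell_j}$---a rule $X_{\ell_j}\to X_{\ell_{j-1}}X_{c_j}$ for every $j\in\intervCC{1}{k}$. By the definition of a Lyndon SLP, $(\val{X_{\ell_{j-1}}},\val{X_{c_j}})$ is the standard factorization of $\val{X_{\ell_j}}$, so every $\val{X_{c_j}}$ is Lyndon, and a one-line induction on $j$ gives $\val{X_{\ell_j}}=\val{X_{\ell_0}}\val{X_{c_1}}\cdots\val{X_{c_j}}$; taking $j=k$ and recalling that $\val{X_{\ell_0}}$ is a single character yields $\suff{\val{X_i}}{1}=\val{X_{c_1}}\cdots\val{X_{c_k}}$. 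This establishes (i) and (ii).

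For (iii), fix $j\in\intervCO{1}{k}$. Since $\val{X_{c_j}}$ is the last Lyndon factor of $\val{X_{\ell_j}}$, it equals $\suff{\val{X_{\ell_j}}}{t}$ with $t=\abs{\val{X_{\ell_j}}}-\abs{\val{X_{c_j}}}\in\intervCO{1}{\abs{\val{X_{\ell_j}}}}$, so applying Lemma~\ref{lemma:lyndon_suffix} to the standard factorization $(\val{X_{\ell_j}},\val{X_{c_{j+1}}})$ of $\val{X_{\ell_{j+1}}}$ at position $t$ gives $\val{X_{c_j}}\val{X_{c_{j+1}}}\lexgt\val{X_{c_{j+1}}}$. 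It then remains to use the elementary fact that for Lyndon words $x,y$ one has $xy\lexgt y\Rightarrow x\lexgeq y$: if instead $x\lexlt y$, then either $x$ is not a prefix of $y$, so the first mismatch lies strictly inside $x$ and already forces $xy\lexlt y$, or $y=xw$ with $w\neq\varepsilon$, and then $y\lexlt w$ (a Lyndon word is lexicographically smaller than each of its proper suffixes) forces $xy=x(xw)\lexlt xw=y$; in both cases $xy\lexlt y$, contradicting $xy\lexgt y$. Hence $\val{X_{c_j}}\lexgeq\val{X_{c_{j+1}}}$ for all $j\in\intervCO{1}{k}$, which is (iii).

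The only genuinely nontrivial step, and the one I would write out carefully, is the implication $\val{X_{c_j}}\val{X_{c_{j+1}}}\lexgt\val{X_{c_{j+1}}}\Rightarrow\val{X_{c_j}}\lexgeq\val{X_{c_{j+1}}}$: it fails for arbitrary strings and really uses that both words are Lyndon---it is essentially the ``only if'' direction of the classical criterion that, for Lyndon $x$ and $y$, the concatenation $xy$ is Lyndon iff $x\lexlt y$. Everything else is bookkeeping along the leftmost path.
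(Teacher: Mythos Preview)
Your proof is correct. The paper's own proof is a one-line appeal to Lemma~\ref{lemma:lyndon_children} (itself imported from \cite{olbrich2024generic}), leaving implicit the fact that peeling off the right child along the leftmost path corresponds to removing the last Lyndon factor of $\suff{\val{X_i}}{1}$---which in turn rests on the standard fact that the last Lyndon factor of a string is its longest Lyndon suffix. You take a more self-contained route: rather than importing Lemma~\ref{lemma:lyndon_children}, you rederive the monotonicity $\val{X_{c_j}}\lexgeq\val{X_{c_{j+1}}}$ directly from Lemma~\ref{lemma:lyndon_suffix} together with the elementary implication ``$xy\lexgt y\Rightarrow x\lexgeq y$ when $y$ is Lyndon.'' This buys a proof that stays entirely within the present paper's toolkit, at the cost of writing out the case analysis the paper hides behind the citation. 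One small remark: your closing comment identifies the implication with the ``only if'' direction of the criterion ``$xy$ Lyndon $\iff x\lexlt y$''; in fact the contrapositive of what you use, namely $x\lexlt y\Rightarrow xy\lexlt y$, is a consequence of the \emph{if} direction (via $xy$ Lyndon $\Rightarrow xy\lexlt y$). This does not affect the argument, which you prove from first principles anyway.
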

\begin{proof}
	Follows immediately from Lemma~\ref{lemma:lyndon_children}.
\end{proof}

\section{Lexicographically sorting a Lyndon Grammar}
\label{sec:sorting}

Given a Lyndon SLP $\mathcal{G}$ with roots $r_1,\dots,r_k$, we want to
rename the symbols such that $\val{X_i} \lexlt \val{X_j}$ holds if and only if $i < j$.
This is possible in $\bigO{\abs{\mathcal{G}}}$ using a similar principle to what
is used for computing the Lyndon grouping in \cite{baier2016linear,olbrich2024generic}:
consider a symbol $X_i$. By Lemma~\ref{lemma:lyndon_children_symbol} and simple properties of Lyndon words, the symbols in
$\dset(X_i) \setminus \leftmost\bra{X_i}$ are lexicographically greater than $X_i$.
Therefore, when considering the symbols in lexicographically decreasing order
we can ``induce'' the position of a symbol $X_i\to X_aX_b$ upon encountering $X_b$:
the symbols in $\leftmost^{-1}\bra{X_i}$ must be the lexicographically largest symbols with $X_a$ as prefix symbol that have not yet been induced. 

Algorithm~\ref{alg:grammar_sorting} shows the procedure.
Throughout, $\mathsf A[i]$ is either the $i$th symbol in the lexicographical order or $\bot$, and for each inserted $X_i$, $\mathsf G[i]$ is the index in $\mathsf A$ of the smallest child of $X_i$ that has been inserted into $\mathsf A$ (or $r(X_i) + 1$ if no child has been inserted yet).
In the first for-loop, all terminal symbols are inserted into $\mathsf A$, and in the second for-loop, the remaining symbols are induced.
Consequently, after the second for-loop, $\mathsf A$ contains the lexicographic order of the symbols, and $\mathsf G$ is (almost) it's inverse, i.e.\ $\mathsf G[\mathsf A[i]] = i + 1$.

There are two operations in Algorithm~\ref{alg:grammar_sorting} that are not immediately obvious, namely finding $\abs{\leftmost^{-1}\bra{X_i}}$ and iterating over all $X_j$ with $X_j\to X_aX_{\mathsf A[i]}$ for a given $\mathsf{A}[i]$.
For the first, note that the values $\abs{\leftmost^{-1}\bra{X_i}}$ can be trivially computed in linear-time using the recurrence
$
\abs{\leftmost^{-1}\bra{X_a}} = 1 + \sum_{X_i\to X_aX_b}{\abs{\leftmost^{-1}\bra{X_i}}}
$.
Secondly, iterating efficiently over all $X_j$ with $X_j\to X_aX_{\mathsf{A}[i]}$
can be achieved using a linear-time preprocessing step where we collect for each symbol $X_i$ the symbols which have $X_i$ as second symbol on the right-hand side.

The following theorem follows.
\begin{restatable}{theorem}{thmlyndonsorting}
	Algorithm~\ref{alg:grammar_sorting} lexicographically sorts a Lyndon SLP in linear time.
	\label{theorem:grammar_sorting}
\end{restatable}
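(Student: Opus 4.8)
The plan is to reason in the ``sorted world'': assume the target renaming $X_1\lexlt\dots\lexlt X_{\abs{\mathcal G}}$ is already installed, so that everything in Section~\ref{sec:lyndon_grammar_properties} applies, and show that Alg.\,\ref{alg:grammar_sorting} terminates with $\mathsf A$ listing all symbols in lexicographic order (equivalently, $\mathsf G[\mathsf A[i]]=i+1$ as claimed). Recall that $\leftmost^{-1}(X_i)=[l(X_i),r(X_i)]=[X_i,r(X_i)]$ is exactly the set of positions of the symbols whose string has $\val{X_i}$ as a prefix, that these blocks form the first-symbol forest (Corollary~\ref{cor:symbol_tree}) whose roots are the terminals and in which the parent of $X_i\to X_aX_b$ is $X_a$, and that $X_a$ is the smallest symbol of its block (Lemma~\ref{lemma:left_bound}).

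First I would prove the structural fact that drives everything: inside a block $\leftmost^{-1}(X_a)$, the symbol $X_a$ sits at the bottom, and immediately above it the blocks $\leftmost^{-1}(X_j)$ of the forest-children of $X_a$ (the symbols $X_j$ with a rule $X_j\to X_aX_b$) appear consecutively and sorted by increasing $\val{X_b}$ — moreover the \emph{whole} block of $X_j$ lies below the \emph{whole} block of $X_{j'}$ whenever $\val{X_b}\lexlt\val{X_{b'}}$. That blocks are non-overlapping and nested is Corollary~\ref{cor:intervals_intersectionfree}; for the ordering I would combine Lemma~\ref{lemma:lyndon_children_symbol} — every symbol of $\leftmost^{-1}(X_j)$ generates a string with prefix $\val{X_j}=\val{X_a}\val{X_b}$ — with the elementary fact that a standard factorization $(u,v)$ satisfies $u\lexlt v$, so that two such strings share the prefix $\val{X_a}$ and then continue with $\val{X_b}$ resp.\ $\val{X_{b'}}$, which decide the comparison. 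I expect this lemma — in particular the case where $\val{X_b}$ is a proper prefix of $\val{X_{b'}}$, which has to be excluded or resolved by exploiting that $\val{X_b}$ is itself Lyndon and heads the standard factorization of $\val{X_j}$ — to be the main obstacle; it is the Lyndon-grammar counterpart of the invariant behind the linear-time Lyndon grouping of \cite{baier2016linear,olbrich2024generic}.

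With that in hand, correctness is an induction over the second for-loop, which scans the positions of $\mathsf A$ in \emph{decreasing} order. After the first loop has placed each terminal $X_\ell$ at its position — its rank obtained by sorting the characters, its block size by the recurrence $\abs{\leftmost^{-1}(X_a)}=1+\sum_{X_i\to X_aX_b}\abs{\leftmost^{-1}(X_i)}$ — and set $\mathsf G[X_\ell]=r(X_\ell)+1$, the invariant I would carry into the handling of position $i$ is that every position $\ge i$ already holds its correct symbol and every already-placed symbol $X_a$ has $\mathsf G[X_a]$ pointing at the lowest still-free slot of $\leftmost^{-1}(X_a)$. For the step, write $Y=\mathsf A[i]$, which is $\neq\bot$ since the symbol at position $i$ is either a terminal or of the form $X_j\to X_aX_b$ with $\val{X_b}\lexgt\val{X_j}$, whence $X_b$ sits at a larger position and was handled at an earlier step. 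For each rule $X_j\to X_aY$ the algorithm reads $\mathsf G[X_a]$ — valid because $X_a$ is already placed (same argument applied to $X_a$, using $\val{X_{a''}}\lexgeq\val Y$ from Lemma~\ref{lemma:lyndon_children_symbol} when $X_a\to X_{a'}X_{a''}$) — writes $X_j$ at $\mathsf G[X_a]-\abs{\leftmost^{-1}(X_j)}$, sets $\mathsf G[X_j]$ to the old $\mathsf G[X_a]$ (which is $r(X_j)+1$), and decreases $\mathsf G[X_a]$ by $\abs{\leftmost^{-1}(X_j)}$; by the structural fact this places $\leftmost^{-1}(X_j)$ exactly where it belongs, since the forest-children of $X_a$ with a strictly larger second component were inserted at earlier steps and sit above, and none with a smaller one has been inserted yet, so the invariant is preserved and $\mathsf A$ is sorted at the end. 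The one extra care point is that several rules with the same right child $Y$ may be triggered at position $i$ and must be processed so that a rule comes after the rule for the left child of its left-hand side when that child also has right child $Y$ (a forest on these rules, topologically sortable in time linear in their number — e.g.\ in nondecreasing order of generated-string length), so that no $\mathsf G[X_a]$ is read before being initialised.

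For the running time, the first loop costs $\bigO{\abs{\mathcal G}}$ given the linearly sortable alphabet, the values $\abs{\leftmost^{-1}(\cdot)}$ are precomputed bottom-up in $\bigO{\abs{\mathcal G}}$, the lists of rules indexed by their second right-hand-side symbol are built in the required order by one linear pass, and in the second loop each rule is touched exactly once doing $\bigO 1$ work because $\mathsf A$ is a permutation of the symbols; hence the algorithm runs in $\bigO{\abs{\mathcal G}}$ time (and $\bigO{\abs{\mathcal G}}$ space).
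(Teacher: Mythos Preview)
Your proposal is correct and follows the same inductive strategy as the paper: both maintain that already-placed symbols sit at their final positions and that $\mathsf G[X_a]$ points to the next free slot of $\leftmost^{-1}(X_a)$; the paper simply packages this as four numbered invariants. Your anticipated ``main obstacle'' (the prefix case in the block-ordering lemma) dissolves once you invoke Corollary~\ref{cor:intervals_intersectionfree}: sibling blocks $[l(X_j),r(X_j)]$ and $[l(X_{j'}),r(X_{j'})]$ are disjoint, and $\val{X_b}\lexlt\val{X_{b'}}$ gives $l(X_j)=X_j<X_{j'}=l(X_{j'})$, so all of $X_j$'s block sits below $X_{j'}$ regardless of any prefix relation between $\val{X_b}$ and $\val{X_{b'}}$ --- no separate Lyndon argument is needed there.

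One point where you are \emph{more} careful than the paper: your ``extra care point'' about the inner-loop order is real. The paper observes that the left children $X_a$ of the rules in $D=\set{X_j\mid X_j\to X_aX_{\mathsf A[i]}}$ are pairwise distinct, concludes the rules can be handled ``individually and independently'', and then invokes Invariant~(4) for $a$; but that invariant is stated only for $a\in\mathcal I$, and if $X_a$ itself lies in $D$ (take $\val{X_{\mathsf A[i]}}=\texttt b$, $\val{X_a}=\texttt{ab}$, $\val{X_j}=\texttt{abb}$) then $a\notin\mathcal I$ at the start of the iteration. Your fix --- process $D$ in nondecreasing $\abs{\val{X_j}}$, arranged during the linear preprocessing of the right-child lists --- closes this gap.
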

\begin{proof}
	\newcommand{\lpa}{\mathcal{PA}_\ell}
	Define $\lpa\bra{X_i} = \set{j \mid X_j\to X_iX_a }$.
	We prove by induction that the following invariants hold before each iteration of the second for-loop in Algorithm~\ref{alg:grammar_sorting}.
	The claim then follows immediately.
	Let $\pi$ be the (unique) permutation of $\intervCO0n$ such that $\val{X_{\pi[i]}} \lexlt \val{X_{\pi[j]}}$ if and only if $i < j$.
	After the second for-loop finishes, $\mathsf A = \pi$.
	Denote by $\mathcal I\subseteq\intervCO0n$ the set of integers in $\mathsf A$, i.e.\ the symbols already inserted.

	We claim that before the $k$-th iteration (i.e.\ $i=n-k$), the following holds:
	\begin{enumerate}[(1)]
		\item $\mathcal I = \set{\pi[j] \mid j\in\intervCO{i}{n}} \cup \set{j \mid X_j\to X_aX_b \text{ with } \pi^{-1}[b] > i} \cup \set{j\mid X_j \to c\in\Sigma}$
			\label{proof:set}
		\item $\mathsf A[j] \neq \bot \iff \mathsf A[j]\in\mathcal I$
			\label{proof:a}
		\item $\mathsf A[j] \neq \bot \iff \mathsf A[j] = \pi[j]$
			\label{proof:position}
		\item $\mathsf G[k] = \min\bra{\set{l(X_j) \mid j \in \mathcal I, X_j \to X_kX_b}\cup\set{r(X_k) + 1}}$ for each $k\in\mathcal I$
			\label{proof:end}
	\end{enumerate}

	\paragraph*{Induction base ($i = n-1$)}
	After the first for-loop, we have $\mathsf G[i] = r(X_i)$, $\mathsf G[i] = l(X_i) - 1$ and $\mathsf A[\mathsf G[i]] = i$ for each terminal symbol $X_i$.
	Since nothing else is written to $\mathsf A$, all invariants hold.

	\paragraph*{Induction step}
	Assume that the invariants hold for $i>0$. We show that after the iteration with $i$, the invariants hold for $i-1$.

	First note that $\mathsf A[i] = \pi[i]$ by Invariant~\ref{proof:set}.
	Observe that the inner for loop writes the element of the set $D := \set{j \mid X_j\to X_aX_{\mathsf A[i]}}$ to $\mathsf A$.
	For any $X_j\to X_aX_{\mathsf A[i]}$ (i.e.\ $j\in D$) we have $\val{X_j} \lexlt \val{X_{\mathsf A[i]}}$ because $\val{X_j}$ is a Lyndon word with $\val{X_{\mathsf A[i]}}$ as proper suffix.
	Thus, $X_j$ cannot be in the first and third sets used in Invariant~\ref{proof:set}.
	Since $\pi^{-1}[\mathsf A[i]] = i \not>i$, $j\notin\mathcal I$.
	This then implies that $D$'s elements have not been written to $\mathsf A$ in previous iterations (Invariant~\ref{proof:a}) and thus that the value at their correct position in $\mathsf A$ is still $\bot$ (Invariants~\ref{proof:a} and \ref{proof:position}).

	We now show that the elements of $D$ are written to the correct positions.
	First, it is clear that the first symbols on the right-hand side of the rules in $D$ are pairwise different (there are no two different rules generating the same string). Thus, we can consider each $j\in D$ individually and independently. Let $X_j\to X_aX_{\mathsf A[i]}$ with $j\in D$.
	By Invariant~\ref{proof:end}, there are two cases: $\mathsf G[a] = r(X_a)$ and $\mathsf G[a] = l(X_k)$ for some $k\in\mathcal I$ with $X_k\to X_aX_b$, and we need to show that $l(X_j) = \mathsf G[a] - \abs{\leftmost^{-1}\bra{X_j}}$ (because $l(X_j) = \pi^{-1}[j]$ by Lemma~\ref{lemma:left_bound}).
	Note that $X_j$ is the lexicographically largest element in $\lpa\bra{X_a}\setminus\mathcal I$ because any larger symbol with $X_a$ as first symbol must be in $\mathcal I$ by Invariant~\ref{proof:set}.
	From this, the claim then follows in both cases from Corollary~\ref{cor:intervals_intersectionfree} and Lemma~\ref{lemma:contiguous}.

	Corollary~\ref{cor:intervals_intersectionfree} then implies that $l(X_j) + \abs{\leftmost\bra{X_j}} - 1 = r(X_j)$ and thus that $\mathsf G[j]$ will be set correctly to $r(X_j) + 1$.

	We already showed that Invariants~\ref{proof:end} and \ref{proof:position} are maintained.
	Consider $\pi[i-1]$.
	If $X_{\pi[i-1]}$ is a terminal symbol, we have $\mathsf A[i-1] = \pi[i-1]$ by Invariants~\ref{proof:set} and \ref{proof:a}.
	Otherwise, $X_{\pi[i-1]}\to X_{a'}X_{b'}$ and $\val{X_{b'}}\lexgt \val{X_{\pi[i-1]}}$, i.e.\ $\pi^{-1}[b']\geq i$ and thus $\mathsf A[i-1] = \pi[i-1]$ again by Invariants~\ref{proof:set} and \ref{proof:a}.
	Therefore, Invariants~\ref{proof:set} and \ref{proof:a} are also maintained.
	\paragraph*{Time complexity}
	Note that in each iteration of the inner for-loop, a new element is inserted into $\mathcal I$.
	It follows that the number of iterations of all loops in Algorithm~\ref{alg:grammar_sorting} is bounded by $\bigO{g}$.
	Using the preprocessing steps explained earlier, the linear time complexity follows immediately.
\end{proof}

\begin{algorithm}
	\caption{Lexicographically sorting a Lyndon SLP}
	\label{alg:grammar_sorting}
	$(\mathsf A,s) \gets (\bot^n, 0)$\;
	\For(\tcc*[h]{in increasing lexicographical order of $c$}){$X_i\to c$ with $c\in\Sigma$}{
		$\mathsf{A}[s] \gets i$\tcc*[r]{There are $s$ symbols $X$ with $\val{X}[0]<c$}
		$s \gets s + \abs{\leftmost^{-1}\bra{X_i}}$\tcc*[r]{There are $\abs{\leftmost^{-1}\bra{X_i}}$ symbols $X$ with $\val{X}[0] = c$}
		$\mathsf{G}[i] \gets s$\tcc*[r]{$X_i$ is the lex.\ smallest symbol in $\leftmost^{-1}\bra{X_i}$}
	}
	\For{$i = n-1\to 0$}{
		\For{$X_j\to X_aX_{\mathsf{A}[i]}$}{
			\tcp{The symbols in $\leftmost^{-1}\bra{X_j}$ must be in $A\intervCO{G[a] - \abs{\leftmost^{-1}\bra{X_j}}}{G[a]}$}
			$\mathsf{G}[j]\gets\mathsf{G}[a]$\;
			$\mathsf{G}[a]\gets\mathsf{G}[a] - \abs{\leftmost^{-1}\bra{X_j}}$\;
			$\mathsf{A}[\mathsf{G}[a]]\gets j$\tcc*[r]{$X_j$ is the lex.\ smallest symbol in $\leftmost^{-1}\bra{X_j}$}
		}
	}
\end{algorithm}

\section{BWT construction}
\label{sec:bwt}

In this section, we describe an algorithm that derives the $\BBWT$ from the Lyndon SLP.
Specifically, combining the second phase of \texttt{GSACA} \cite{baier2016linear,olbrich2024generic} with run-length encoding on the lexicographically sorted Lyndon SLP results in a very efficient algorithm on real data.

The following lemmas and corollary establish a relationship between the lexicographical order of suffixes and the lexicographical order of certain symbols of the Lyndon grammar.
\begin{lemma}[{\cite[Lemma 3.3]{olbrich2024generic}}]
	Let $\mathcal L_i$ be the longest prefix of $\suff{S}{i}$ that is Lyndon. Then, $\mathcal L_i\lexlt\mathcal L_j$ implies $\suff{S}{i}\lexlt\suff{S}{j}$.
	\label{lemma:lyndon_prefix}
\end{lemma}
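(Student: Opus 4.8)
The plan is to split on the prefix relation between $\mathcal L_i$ and $\mathcal L_j$. Since $\mathcal L_i\lexlt\mathcal L_j$, the two words are distinct and $\mathcal L_j$ is not a prefix of $\mathcal L_i$, so either $\mathcal L_i$ is not a prefix of $\mathcal L_j$, or $\mathcal L_i$ is a proper prefix of $\mathcal L_j$. In the first case $\mathcal L_i$ and $\mathcal L_j$ agree on a common prefix and then $\mathcal L_i$ carries the strictly smaller character; as $\mathcal L_i$ and $\mathcal L_j$ are prefixes of $\suff{S}{i}$ and $\suff{S}{j}$, this difference is inherited and $\suff{S}{i}\lexlt\suff{S}{j}$ follows immediately.

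So assume $\mathcal L_i$ is a proper prefix of $\mathcal L_j$ and set $m=\abs{\mathcal L_i}$. I would strip off maximal powers of $\mathcal L_i$: write $\suff{S}{i}=\mathcal L_i^k\,r$ with $k\ge1$ maximal (so $\mathcal L_i$ is not a prefix of $r$), and $\mathcal L_j=\mathcal L_i^a\,e$ with $a\ge1$ maximal (so $\mathcal L_i$ is not a prefix of $e$), whence $\suff{S}{j}=\mathcal L_i^a\,eX$ for some possibly empty $X$. The crux is to establish \emph{(i)} $r\lexlt\mathcal L_i$, and \emph{(ii)} $e\neq\varepsilon$, $e$ is not a prefix of $\mathcal L_i$ (so $e$ and $\mathcal L_i$ differ at a position $<m$), and $e\lexgt\mathcal L_i$. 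For \emph{(i)}: if instead $r\lexgt\mathcal L_i$, then (since $\mathcal L_i$ is not a prefix of $r$) $r$ agrees with $\mathcal L_i$ on a proper prefix and then has a larger character, so $\mathcal L_i^k$ followed by that proper prefix of $\mathcal L_i$ and that character is a Lyndon word (the standard fact underlying Duval's algorithm) strictly longer than $\mathcal L_i$ — contradicting maximality of $\mathcal L_i$. For \emph{(ii)}: $e\neq\varepsilon$ because $\mathcal L_j$ is primitive and $\ne\mathcal L_i$; $\mathcal L_i$ is not a prefix of $e$ by maximality of $a$; $e$ is not a proper prefix of $\mathcal L_i$ because otherwise $e$ would be a nonempty proper prefix of $\mathcal L_j$ that is also a suffix of $\mathcal L_j$, which is impossible for a Lyndon word $w$ (such a string $b$ would be a proper suffix, hence $w\lexlt b$, and a prefix, hence $b\lexleq w$); finally, $\mathcal L_j\lexlt\suff{\mathcal L_j}{m}=\mathcal L_i^{a-1}e$ (as $\mathcal L_j$ is Lyndon) reduces, after cancelling the common prefix $\mathcal L_i^{a-1}$, to $\mathcal L_i e\lexlt e$, whose deciding mismatch must fall within the first $m$ characters (else $\mathcal L_i$ would be a prefix of $e$) and therefore witnesses $\mathcal L_i\lexlt e$.

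It then remains to compare $\suff{S}{i}=\mathcal L_i^k r$ with $\suff{S}{j}=\mathcal L_i^a eX$ after cancelling $\mathcal L_i^{\min\{k,a\}}$, in the three cases $k<a$, $k>a$, $k=a$. If $k<a$, the leftover of $\suff{S}{i}$ is $r$ while that of $\suff{S}{j}$ starts with $\mathcal L_i$; by \emph{(i)}, $r$ — being $\lexlt\mathcal L_i$ and not beginning with $\mathcal L_i$ — is strictly smaller than any string starting with $\mathcal L_i$, so $\suff{S}{i}\lexlt\suff{S}{j}$. If $k>a$, symmetrically the leftover of $\suff{S}{i}$ starts with $\mathcal L_i$ while that of $\suff{S}{j}$ begins with $e$, which by \emph{(ii)} is strictly larger than any string starting with $\mathcal L_i$, again giving $\suff{S}{i}\lexlt\suff{S}{j}$. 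If $k=a$, the leftovers are $r$ and $eX$, and by \emph{(i)}, \emph{(ii)} and transitivity $r\lexlt\mathcal L_i\lexlt e$; hence $r\lexlt e$, and as $e$ is then not a prefix of $r$ this yields $r\lexlt eX$ and so $\suff{S}{i}\lexlt\suff{S}{j}$.

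The main obstacle is the proper-prefix case, and within it the two claims \emph{(i)} and \emph{(ii)}: one must invoke that Lyndon words are unbordered and the Duval-style fact that a power of a Lyndon word followed by a proper prefix of it and then a larger character is again Lyndon, and one must carefully track whether the (possibly empty) tails $r$ and $e$ are prefixes of $\mathcal L_i$ or genuinely differ from it, so that every character comparison is legitimate and occurs at a position $<m$.
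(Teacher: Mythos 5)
Your proof is correct, and since the paper cites this lemma from \cite{olbrich2024generic} without reproducing a proof, there is nothing in the paper to compare against directly. The decomposition into the non-prefix case and the proper-prefix case, the power decomposition $\suff{S}{i}=\mathcal L_i^k r$ and $\mathcal L_j=\mathcal L_i^a e$ with $k,a$ maximal, and the two auxiliary claims (i) $r\lexlt\mathcal L_i$ and (ii) $\mathcal L_i\lexlt e$ with the deciding mismatch inside the first $\abs{\mathcal L_i}$ positions, are all sound, and the final three-way case analysis on $k$ versus $a$ closes cleanly. A somewhat more compact packaging of the same ideas: one shows $\suff{S}{i}\lexlt\mathcal L_i^\infty$ (a mismatch in the other direction would, by the same Duval extension fact you invoke, produce a longer Lyndon prefix of $\suff{S}{i}$), and that $\mathcal L_i^\infty\lexlt\mathcal L_j$ for distinct Lyndon words $\mathcal L_i\lexlt\mathcal L_j$ (a mismatch in the other direction exhibits a nonempty border of $\mathcal L_j$); chaining $\suff{S}{i}\lexlt\mathcal L_i^\infty\lexlt\mathcal L_j\lexleq\suff{S}{j}$ gives the claim. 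Your (i) and (ii) are precisely the finite instances of these two inequalities, so the two write-ups are essentially the same argument.
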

\begin{lemma}
	Consider an occurrence of $X_i\to X_aX_b$ at position $j$ in the text $S$ (this implies $S\intervCC{j}{j+\abs{\val{X_i}}} = \val{X_i}$).
	Then, $\val{X_b}$ is the longest prefix of $\suff{S}{j + \abs{\val{X_a}}}$ that is Lyndon.
	\label{lemma:lyndon_prefix_order}
\end{lemma}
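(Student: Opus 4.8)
I would read the occurrence as a node of the derivation tree of the Lyndon SLP — this is what makes the statement true, since for an arbitrary substring occurrence it fails (for instance $\val{X_i}=\texttt{aab}$, whose standard factorization is $(\texttt a,\texttt{ab})$, occurs in $\texttt{aabaabc}$ at a non-node position followed by $\texttt c$, and there $\texttt{abc}$, not $\texttt{ab}$, is the longest Lyndon prefix). So let $X_i$ be the node spanning $\intervCO{j}{j+\abs{\val{X_i}}}$, set $j'=j+\abs{\val{X_a}}$, and note that the right child $X_b$ spans $\intervCO{j'}{j+\abs{\val{X_i}}}$ and is the \emph{topmost} node starting at $j'$, because every node properly containing $X_b$ — namely $X_i$ and its ancestors — starts strictly to the left of $j'$. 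The easy half of the proof is that $\val{X_b}$ is a Lyndon prefix of $\suff{S}{j'}$: it is Lyndon by definition of a Lyndon SLP, and it equals $S\intervCO{j'}{j+\abs{\val{X_i}}}$. Hence $\val{X_b}$ is a prefix of the longest Lyndon prefix $\mathcal L_{j'}$ of $\suff{S}{j'}$, and everything reduces to showing $\mathcal L_{j'}=\val{X_b}$, i.e.\ that no strictly longer prefix of $\suff{S}{j'}$ is Lyndon.

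Suppose $\mathcal L_{j'}=\val{X_b}R$ with $R$ nonempty. First I would reduce to $R$ being Lyndon: if $\val{X_b}\lexgeq R_1$ for the first Lyndon factor $R_1$ of $R$, then prepending $\val{X_b}$ to the Lyndon factorization of $R$ would give a Lyndon factorization of $\mathcal L_{j'}$ with at least two factors, contradicting that $\mathcal L_{j'}$ is Lyndon; so $\val{X_b}\lexlt R_1$, whence $\val{X_b}R_1$ is a Lyndon word (a product of two Lyndon words, the first lexicographically smaller) that is still a Lyndon prefix of $\suff{S}{j'}$ strictly longer than $\val{X_b}$ — so we may take $R=R_1$. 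Since $\val{X_b}R$ is Lyndon, it is lexicographically smaller than its proper suffix $R$, i.e.\ $\val{X_b}R\lexlt R$. I will contradict this by showing $R\lexlt\val{X_b}R$, which follows from the elementary fact ``$v_1\lexgeq v_2$ implies $v_2\lexlt v_1v_2$ for nonempty $v_2$'' once we know $R\lexleq\val{X_b}$.

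To obtain $R\lexleq\val{X_b}$, let $q=j+\abs{\val{X_i}}$. Since $\mathcal L_{j'}=\val{X_b}R$ is a prefix of $\val{X_b}\suff{S}{q}$, the Lyndon word $R$ is a prefix of $\suff{S}{q}$, hence a prefix of $\mathcal L_q$, so $R\lexleq\mathcal L_q$; it thus suffices to prove $\mathcal L_q\lexleq\val{X_b}$. For this I climb the derivation tree from $X_i$ along right-child edges to the highest ancestor $A$ sharing $X_i$'s right endpoint $q$ (possibly $A=X_i$); then $\val{X_b}$ is a proper suffix of the Lyndon word $\val A$ (it is a proper suffix of $\val{X_i}$, a suffix of $\val A$). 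If $A$ is a root $r_m$: either $q=\abs S$, impossible since $R$ is a nonempty prefix of $\suff{S}{q}=\varepsilon$, or $\suff{S}{q}=v_{m+1}\cdots v_k$ with first Lyndon factor $v_{m+1}$, so $\mathcal L_q=v_{m+1}\lexleq v_m=\val A\lexlt\val{X_b}$. Otherwise $A$ is the left child of a node $A'\to A X_d$; then $X_d$ is the topmost node starting at $q$, so $\mathcal L_q=\val{X_d}$ by the statement applied to the rule $A'\to A X_d$ at the node $A'$ (which is strictly higher in the tree — this is the induction step). As $(\val A,\val{X_d})$ is the standard factorization of $\val{A'}$ and $\val{X_b}$ is a proper suffix of $\val A$, Lemma~\ref{lemma:lyndon_suffix} gives $\val{X_b}\val{X_d}\lexgt\val{X_d}$; if $\val{X_b}\lexlt\val{X_d}$ held then $\val{X_b}\val{X_d}$ would be Lyndon, hence $\lexlt\val{X_d}$, a contradiction — so $\val{X_b}\lexgeq\val{X_d}=\mathcal L_q$. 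In either case $R\lexleq\mathcal L_q\lexleq\val{X_b}$, which closes the argument. The whole proof is a top-down induction on the derivation tree, with the root symbols (and more generally the nodes whose right-edge climb reaches a root) forming the base.

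The main obstacles are the ones just hinted at: recognizing that the occurrence must be a tree node; the reduction to $\mathcal L_q\lexleq\val{X_b}$ together with the walk up along right-child edges and the root-vs-internal case split (this step is essentially re-deriving the standard link between the Lyndon forest and the Lyndon array, and could alternatively be quoted from that literature to shorten the proof); and the small trick of passing to the first Lyndon factor of $R$ so that $R$ becomes a Lyndon prefix of $\suff{S}{q}$ and is therefore dominated by $\mathcal L_q$. The remaining steps — the product rule ``$uv$ Lyndon $\iff u\lexlt v$'' for Lyndon $u,v$, the observation ``$v_1\lexgeq v_2\Rightarrow v_2\lexlt v_1v_2$'', and ``the first Lyndon factor is the longest Lyndon prefix'' — are standard and routine.
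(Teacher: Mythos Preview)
Your argument is correct. You rightly identify that ``occurrence'' must mean a node of the derivation tree (otherwise the claim is false, as your $\texttt{aabaabc}$ example shows), and your top-down induction on the derivation tree is well-founded: in Case~2 the node $A'$ is a strict ancestor of $X_i$, so the inductive appeal is legitimate, while Case~1 (the right-edge climb reaches a root) handles the base without needing the hypothesis. The chain of inequalities $R\lexleq\mathcal L_q\lexleq\val{X_b}$ together with the elementary fact $v_1\lexgeq v_2\Rightarrow v_2\lexlt v_1v_2$ cleanly yields the contradiction.

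The paper takes a very different route: its entire proof is the one-line citation ``Follows immediately from the relationship between Lyndon forest and Lyndon grammar \cite{lyndon_grammar} and \cite[Lemma~15]{franek2016algorithms}.'' That cited lemma is precisely the Lyndon-forest/Lyndon-array correspondence --- that the highest Lyndon-forest node starting at position $p$ spans exactly the longest Lyndon prefix of $\suff{S}{p}$ --- which you re-derive from scratch in your Case~1/Case~2 analysis. You even anticipate this, noting that the step ``could alternatively be quoted from that literature to shorten the proof.'' So the two approaches agree on the underlying mathematics; the paper outsources the work to a reference, while you supply a self-contained argument using only Lemma~\ref{lemma:lyndon_suffix} and basic Lyndon-word facts. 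Your version is longer but has the advantage of not depending on an external lemma whose exact statement the reader may not have at hand.
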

\begin{proof}
	Follows immediately from the relationship between Lyndon forest and Lyndon grammar \cite{lyndon_grammar} and \cite[Lemma 15]{franek2016algorithms}.
\end{proof}
\begin{corollary}
	\label{corollary:lyndon_prefix_order}
	Consider occurrences of $X_i\to X_aX_b$ at position $j$ and of $X_{i'}\to X_{a'}X_{b'}$ at position $j'$.
	It holds $\suff{S}{j + \abs{\val{X_a}}} \lexlt \suff{S}{j' + \abs{\val{X_{a'}}}}$ if $\val{X_b} \lexlt \val{X_{b'}}$.
\end{corollary}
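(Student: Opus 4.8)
The plan is to reduce the statement to Lemma~\ref{lemma:lyndon_prefix} together with the lemma immediately preceding the corollary, which identifies $\val{X_b}$ as the longest Lyndon prefix of $\suff{S}{j+\abs{\val{X_a}}}$ and, symmetrically, $\val{X_{b'}}$ as the longest Lyndon prefix of $\suff{S}{j'+\abs{\val{X_{a'}}}}$. Writing $\mathcal{L}_p$ for the longest Lyndon prefix of $\suff{S}{p}$, this preceding lemma gives $\mathcal{L}_{j+\abs{\val{X_a}}}=\val{X_b}$ and $\mathcal{L}_{j'+\abs{\val{X_{a'}}}}=\val{X_{b'}}$, so the corollary becomes a statement purely about how the order of two suffixes relates to the order of their longest Lyndon prefixes.

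For the direction ``$\val{X_b}\lexlt\val{X_{b'}}\Rightarrow\suff{S}{j+\abs{\val{X_a}}}\lexlt\suff{S}{j'+\abs{\val{X_{a'}}}}$'', I would simply substitute $\mathcal{L}_{j+\abs{\val{X_a}}}=\val{X_b}\lexlt\val{X_{b'}}=\mathcal{L}_{j'+\abs{\val{X_{a'}}}}$ into Lemma~\ref{lemma:lyndon_prefix}, which immediately yields $\suff{S}{j+\abs{\val{X_a}}}\lexlt\suff{S}{j'+\abs{\val{X_{a'}}}}$. This half is essentially a one-line application of the two cited lemmas.

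For the converse I would argue by contraposition, using that $\lexlt$ restricted to the suffixes of a fixed string is a strict total order (distinct suffixes have distinct lengths and are therefore $\lexlt$-comparable). If $\val{X_b}\not\lexlt\val{X_{b'}}$, then either $\val{X_{b'}}\lexlt\val{X_b}$ or $\val{X_b}=\val{X_{b'}}$. In the first case, applying the implication already proven with the two occurrences exchanged gives $\suff{S}{j'+\abs{\val{X_{a'}}}}\lexlt\suff{S}{j+\abs{\val{X_a}}}$, which contradicts $\suff{S}{j+\abs{\val{X_a}}}\lexlt\suff{S}{j'+\abs{\val{X_{a'}}}}$. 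The remaining case $\val{X_b}=\val{X_{b'}}$ — equivalently $X_b=X_{b'}$, since distinct symbols of a Lyndon SLP generate distinct words — is the one requiring care and is what I expect to be the main obstacle: here the two suffixes share their longest Lyndon prefix yet can still be strictly $\lexlt$-ordered (e.g.\ when one is a proper prefix of the other), so to close the argument one must either read the statement as restricted to $X_b\neq X_{b'}$ — which is all that is needed for the construction in Section~\ref{sec:bwt}, where occurrences sharing a right-child symbol are kept in one group and separated only by information further to the right — or supplement the comparison of $\val{X_b}$ with $\val{X_{b'}}$ by a tie-breaking comparison of the text past $\val{X_b}$. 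Everything outside this equality case is routine.
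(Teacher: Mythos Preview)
Your approach is exactly what the paper intends: the corollary is stated without proof and is meant to follow immediately from Lemma~\ref{lemma:lyndon_prefix} combined with the preceding lemma identifying $\val{X_b}$ as the longest Lyndon prefix of $\suff{S}{j+\abs{\val{X_a}}}$. Your forward direction and the contrapositive case $\val{X_{b'}}\lexlt\val{X_b}$ are precisely this.

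Your observation about the equality case is well taken and is in fact a genuine imprecision in the statement as printed: when $X_b=X_{b'}$ but $j+\abs{\val{X_a}}\neq j'+\abs{\val{X_{a'}}}$, the two suffixes are distinct and hence strictly $\lexlt$-comparable, yet $\val{X_b}\lexlt\val{X_{b'}}$ fails, so the biconditional cannot hold literally. The paper does not address this; it simply uses the corollary to argue that suffixes are \emph{grouped} in $\SA$ by their right-child symbol (the paragraph following the corollary), for which only the implication ``$\val{X_b}\lexlt\val{X_{b'}}\Rightarrow\suff{S}{j+\abs{\val{X_a}}}\lexlt\suff{S}{j'+\abs{\val{X_{a'}}}}$'' and its symmetric twin are needed. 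Your reading that the statement should be taken modulo the case $X_b=X_{b'}$ (or, equivalently, as a one-sided implication in both directions) is the correct way to reconcile the corollary with its intended use.
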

To see why Corollary~\ref{corollary:lyndon_prefix_order} is useful, consider a symbol $X_i\to X_aX_b$.
Each occurrence of $X_i$ in the derivation tree corresponds to a suffix of the text with prefix $\val{X_b}$,
and each such suffix introduces an occurrence of the last character of $\val{X_a}$ to the $\BWT$ in the $\SA$-interval of $\val{X_b}$.
In our running example (cf.\ Figure~\ref{fig:tree_slp_example}), each occurrence of $X_8\to X_7X_9$ introduces a $\texttt{b}$ in the $\SA$-interval of $\val{X_9}=\texttt{c}$.


\begin{algorithm}
	\caption{Computing $\SA'_\circ$ from the lexicographically sorted Lyndon grammar \cite[Algorithms~2 and 3]{olbrich2024generic}.}
	\label{alg:sa_from_grammar}
	$(s,\SA'_\circ,\mathsf L) \gets (n, [], []^n)$\;
	\For(\tcc*[h]{insert the end positions of the roots $r_1,\dots,r_k$ of the Lyndon SLP}){$i=k\to 1$}{
		$\mathsf L[2(r_i - 1) + 1].\mathit{append}(s)$\;
		$s\gets s-\abs{\val{X_{r_i}}}$\;
	}
	\For{$m=0\to 2g - 1$}{
		\ForEach{$i$ in $\mathsf L[m]$}{
			$\SA'_\circ.\mathit{append}(i)$\;
			\For{$j$ with $\nss[j] = i$}{
				Find $X_k$ such that $\val{X_k}=S\intervCO{j}{\nss[j]}$\;
				$\mathsf L[2(k-1)].\mathit{append}(j)$\;
			}
		}
	}
\end{algorithm}
In \cite{olbrich2024generic}, an array $\SA'_\circ$\footnote{$\SA'_\circ$ is essentially the analogue to $\SA$ for conjugates of Lyndon factors and the infinite periodic order. A precise definition is omitted because only the relationship between $SA'_\circ$ and $\BBWT$ is relevant for this paper.} is computed such that $\BBWT[i] = S[\SA'_\circ[i]]$ holds for all $i$.
This array arises from sorting the conjugates of the Lyndon factors of $S$ and replacing each start position of a Lyndon word with the end position of the Lyndon word in $S$.
It was shown that Algorithm~\ref{alg:sa_from_grammar} correctly computes the array $\SA'_\circ$ in linear time \cite{olbrich2024generic}.\footnote{In \cite{olbrich2024generic}, Lyndon factors (roots of the Lyndon forest) come after every non-root representing the same Lyndon word. To reflect this, we use $\mathsf L[2(i - 1) + 1]$ if an occurrence of $X_i$ is a root and $\mathsf L[2(i-1)]$ otherwise.}
Basically, Lemma~\ref{lemma:lyndon_prefix} implies that the positions in $\SA'_\circ$ are grouped by the longest Lyndon prefixes of the corresponding suffixes.
The longest Lyndon prefix of $\suff{S}{i}$ is $S\intervCO{i}{\nss[i]}$.
Because $\suff{S}{\nss[i]}\lexlt\suff{S}{i}$ by definition, we can proceed by induction from lexicographically small to large.\footnote{There is a slight detail omitted here for clarity and brevity. Namely, that for the $\BBWT$ we are concerned with the \emph{next smaller conjugate} of the respective Lyndon factor instead of the next smaller suffix \cite{olbrich2024generic}.}
Consider for instance the suffixes starting with $\texttt{c}$ at indices $5$ and $7$ in our running example (cf.\ Figure~\ref{fig:tree_slp_example}).
By Lemma~\ref{lemma:lyndon_prefix}, they are in the same ``Lyndon group'' \cite{baier2016linear} in the sorted list of suffixes because $\texttt{c}$ is their longest Lyndon prefix.
We have $\nss[5] = 5 + \abs{\texttt{c}} = 6$ and $\nss[7] = 7 + \abs{\texttt{c}} = 8$.
When processing these Lyndon groups in lexicographically increasing order, the relative order of suffixes $6$ and $8$ is known at the point in time when the Lyndon group \texttt{c} is considered (because they are lexicographically smaller), and can therefore determine the lexicographical order of the suffixes $5$ and $7$.

We are now going to transform Algorithm~\ref{alg:sa_from_grammar} such that it outputs the $\BBWT$ instead of $\SA'_\circ$ and works without $\nss$ and the text indices.
First, consider a text index $i\in\intervCO0N$ and let $u_i$ be the highest node in the Lyndon forest at position $i$.
(Equivalently, $u_i$ corresponds to the longest Lyndon word starting at $i$.)
Now consider the parent $v_i$ of $u_i$.
By its definition, it is clear that $u_i$ is the right child of $v_i$. Let $w_i$ be the left child of $v_i$.
We can observe that, starting from $w_i$, the nodes on the rightmost path (excluding $w_i$) correspond exactly to the set $\set{j\in\intervCO0i \mid \nss[j]=i}$.
For instance, in our running example there is a node labelled with $\val{X_8}=\texttt{bc}$ at position $6$ (cf.\ Figures~\ref{fig:intro_examples} and \ref{fig:tree_slp_example}).
This node is the right child of a node labelled with $\val{X_6}$, which has a node labelled with $\val{X_5}$ as left child.
There are two nodes on the rightmost path from this latter node, namely one at position $4$ labelled with $\val{X_8}=\texttt{bc}$ and one at position $5$ labelled with $\val{X_9}=\texttt{c}$. We consequently have $\nss[4]=\nss[5] = 6$.

Second, for $\SA'_\circ[k] = i$ we have $\BBWT[k] = S[i-1]$ and it follows that $\BBWT[k]$ is the last character of the string corresponding to $v_i$'s left child $w_i$.
Because of these two observations, we can replace the insertion of $i$ in Algorithm~\ref{alg:sa_from_grammar} with the insertion of $w_i$.
Consequently, we can reformulate Algorithm~\ref{alg:sa_from_grammar} to Algorithm~\ref{alg:bwt_from_grammar}.

\begin{theorem}
	Algorithm~\ref{alg:bwt_from_grammar} correctly computes $\BBWT$ from the sorted Lyndon SLP in $\bigO{N}$ time.
\end{theorem}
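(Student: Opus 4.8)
The plan is to derive Alg.~\ref{alg:bwt_from_grammar} from Alg.~\ref{alg:sa_from_grammar} --- which is already known to compute $\SA'_\circ$ in linear time --- by three local, correctness-preserving rewrites, and then to check that the rewritten algorithm touches neither the text nor $\nss$ and emits exactly the characters of $\BBWT$.

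First I would fix the dictionary underlying all three rewrites. For a text position $i$, let $u_i$ be the highest node of the Lyndon forest whose occurrence starts at $i$ (equivalently, the longest Lyndon word starting at $i$), and let $v_i$ be its parent in the derivation tree; write $X_\ell$ for the left child of $v_i$, so that $v_i$'s right child is $u_i$. (Being the longest Lyndon word at $i$ forces $u_i$ to be a right child, and the topmost node of a Lyndon factor has the derivation-tree root in place of $v_i$, which is exactly what the odd slots $\mathsf L[2(\cdot-1)+1]$ and the initial root loop encode --- just as in Alg.~\ref{alg:sa_from_grammar}.) \emph{Rewrite~1:} store/process the node $v_i$ wherever Alg.~\ref{alg:sa_from_grammar} stores/processes the index $i$. \emph{Rewrite~2:} replace the guard ``$\nss[j]=i$'' of the inner loop by a walk along the rightmost path hanging below $X_\ell$; this requires proving that $\set{j : \nss[j]=i}$ is the set of start positions of those rightmost-path nodes --- a node lies there iff its parent starts strictly earlier, iff the longest Lyndon word at that node's start equals the node's string and ends at $i-1$, which (via Lemma~\ref{lemma:lyndon_children}/Lemma~\ref{lemma:lyndon_children_symbol} and the Lyndon-forest/Lyndon-grammar correspondence) is a restatement of $\nss[\,\cdot\,]=i$. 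Each symbol met on that walk is exactly the $X_k$ with $\val{X_k}=S\intervCO{j}{\nss[j]}$ that Alg.~\ref{alg:sa_from_grammar} would have searched for, so the text index is no longer needed. \emph{Rewrite~3:} for each processed $v_i$ emit the last character of $X_\ell$ instead of $i$; by the relationship between $\SA'_\circ$ and $\BBWT$ this equals $S[i-1]$, and since $\val{X_\ell}$ ends at text position $i-1$, $S[i-1]$ is indeed its last character, which one reads off in $\bigO{1}$ by descending $X_\ell$'s rightmost path to its terminal descendant (precomputed once per symbol in $\bigO{g}\subseteq\bigO{N}$ total). For the wrap-around characters contributed by whole Lyndon factors there is no left child and one emits the last character of the root symbol itself; I would check this against the matching special step of Alg.~\ref{alg:sa_from_grammar} and the odd-slot convention.

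Correctness then follows from a step-for-step simulation argument: under the dictionary $i\leftrightarrow v_i$, both algorithms scan the buckets $\mathsf L[0],\dots,\mathsf L[2g-1]$ in the same order and process corresponding items in the same order (distinct symbols generate distinct words, so all lookups are unambiguous, and Corollary~\ref{corollary:lyndon_prefix_order} shows that bucketing by symbol coincides with bucketing suffixes by their longest Lyndon prefix). Hence Alg.~\ref{alg:bwt_from_grammar} outputs $S[\SA'_\circ[0]-1],S[\SA'_\circ[1]-1],\dots$, i.e.\ $\BBWT$. For the time bound: the outer loop handles one item per conjugate of a Lyndon factor, hence $\sum_t\abs{v_t}=N$ items; each inner-loop step consumes one $j$ with $\nss[j]=i$ and, as $\nss$ is a function, there are at most $N$ such steps overall; the bucket array has $2g\in\bigO{N}$ cells; and every elementary operation (reading a symbol's children, its rightmost-path terminal) is $\bigO{1}$ after linear preprocessing of the same kind already used for Alg.~\ref{alg:grammar_sorting}. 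Thus the total time is $\bigO{N}$.

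I expect the crux to be Rewrite~2: establishing rigorously that the $\nss$-preimage of $i$ is exactly the rightmost path below $v_i$'s left child, and that walking that path in the grammar visits those positions in an order consistent with the ``lexicographically small to large'' induction behind Lemma~\ref{lemma:lyndon_prefix}. Getting the root/wrap-around bookkeeping to line up with the odd-slot convention is also fiddly. Everything else is routine transfer of correctness and complexity from Alg.~\ref{alg:sa_from_grammar}.
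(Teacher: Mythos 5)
Your proposal takes essentially the same approach as the paper: the three rewrites you describe (replace the index $i$ by the node $v_i$, replace the guard $\nss[j]=i$ by the rightmost-path walk hanging below $v_i$'s left child, and emit the last character of $v_i$'s left child instead of $i$) are exactly the transformation the paper sketches in the paragraph preceding the theorem, and your counting argument for the $\bigO{N}$ bound matches the paper's one-sentence proof, which observes that every iteration of the inner loop contributes at least one character to $\BBWT$. If anything you are more explicit than the paper, whose proof block records only the time bound and leaves the correctness direction to the preceding informal discussion.
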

\begin{proof}
	Correctness follows from the above observations.
	The linear worst-case time complexity trivially follows from the fact that each iteration of the inner loop appends at least one character to $\BBWT$.
\end{proof}

Note that runs in a list in $\mathsf L$ compound: If there is a run of symbol $X_i$, there is also a run of at least the same length of $X_\ell$ in $X_r$ for each non-terminal $X_s\in \rightmost\bra{X_i}$ with $X_s\to X_\ell X_r$. Thus, Algorithm~\ref{alg:bwt_from_grammar} often requires much fewer than $N$ iterations.

\begin{algorithm}
	\caption{Deriving the $\BBWT$ from the lexicographically sorted Lyndon grammar}
	\label{alg:bwt_from_grammar}
	$\mathsf L \gets []^{2g}$\tcc*{initialize $\mathsf L[i]$ as an empty RLE list $\forall i\in\intervCO{0}{2g}$}
	\lFor(\tcc*[h]{insert the roots $r_1,\dots,r_k$}){$i=1\to k$}{$\mathsf{A}[2(r_i - 1) +1].\mathit{append}(r_i)$}
	\For{$i=0\to 2g - 1$}{
		\For{$(s,\mathit{count})$ \textnormal{\textbf{in}} $\mathsf{L}[i]$}{
			$\BBWT.\mathit{append}(\mathit{last\_char}[s], \mathit{count})$\;
			\While(\tcc*[h]{walk the rightmost path from $X_s$}){$X_s$ is a non-terminal}{
				Let $X_s\to X_a X_b$\;
				$\mathsf{A}[2(b-1)].\mathit{append}(a, \mathit{count})$\;
				$s \gets b$\;
			}
		}
	}
\end{algorithm}

\subsection{Computing $\BWT$ and $\eBWT$}
\label{sec:bbwt_ebwt}

When the text $S$ is a Lyndon word, the $\BBWT$ of $S$ is equal to the original $\BWT$ of $S$ \cite{linearbbwt}.
Since the $\BWT$ is independent of the rotation of the input and $\$S$ is a Lyndon word, we have $\BBWT\bra{\$S}=\BWT\bra{S\$}$. The latter is the \$-$\BWT$ commonly computed via the suffix array.
Therefore, we can simply compute the \$-$\BWT$ by prepending a sentinel character to the text. Note that this is trivial and can be done after building the Lyndon grammar of $S$.

The $\eBWT$ of a set $\mathcal S$ of strings is the same as the $\BBWT$ of the string $\mathcal S^c$ that arises from arranging the canonical forms of the strings in $\mathcal S$ in lexicographically decreasing order \cite{olbrich2024generic}.
In particular, the Lyndon factors of $\mathcal S^c$ are exactly the canonical forms of the input strings (assuming that the input strings are primitive).
Note that, in Algorithm~\ref{alg:bwt_from_grammar}, sorting the canonical forms of the input strings is done implicitly in the first for-loop.

Note that the Lyndon grammars of the input strings are independent of each other (as long as equal Lyndon words are assigned equal grammar symbols).
Therefore, we can also construct their Lyndon grammars independently of each other while using the same dictionary.
As a consequence, it is easy to use our algorithms for parallel construction of the Lyndon grammar of a collection of sequences. 
Although our algorithm for deriving the $\BWT$ from the grammar is not parallelized, parallel construction of the Lyndon grammar leads to a substantial reduction in wall-clock-time because it is by far the most time-consuming part of the pipeline (see Section~\ref{sec:experiments}).

\subsection{Other $\BWT$ variants for string collections}

Besides the original $\eBWT$, several other $\BWT$ variants for string collections have been proposed (see \cite{cenzato2024survey} for an overview).
Several of the variants are especially suited to be computed with our algorithm, because there the input sequences $\mathcal S=\set{S_1,\dots,S_n}$ can be parsed independently (and in parallel) like for the $\eBWT$:
The \emph{dollar-$\eBWT$} $\dolEBWT\bra{\mathcal S}=\eBWT\bra{\set{S\$ \mid S\in\mathcal S}}$,
the \emph{multidollar $\BWT$} $\mdolBWT\bra{\mathcal S} = \BWT(S_1\$_1\dots S_n\$_n)$, and
the \emph{concatenated $\BWT$} $\concBWT\bra{\mathcal S}=\BWT(S_1\$\dots S_n\$\#)$,
where $\#<\$$ and $\$_1<\dots<\$_n$ are smaller than all characters in $\mathcal S$.
This is done by constructing the Lyndon SLP of each $S_i$ (with a shared dictionary) and then applying post-processing steps such that the desired $\eBWT$ variant can be derived from the resulting SLP.
This is possible because the $\$$'s separate the input strings in the sense that no Lyndon word starting inside some $S_i$ can contain a $\$$.

More specifically,
for the $\dolEBWT$, no post-processing is required and we can just apply our $\eBWT$ construction algorithm, except that finding the canonical forms of the strings is trivial because $\eBWT(\{S\$\mid S\in\mathcal S\})=\eBWT(\{\$S\mid S\in\mathcal S\})$ and $\$S$ is a Lyndon word.
For the $\mdolBWT$ and $\concBWT$, we replace each $S_i$ in $S_1\$_1\dots S_n\$_n$ and $S_1\$\dots\$S_n\$\#$, respectively, with the roots of the grammar generating $S_i$ and compute the grammar of a rotation of the resulting string using Algorithm~\ref{alg:bwt_from_grammar}:
both $\$_1S_2\$_2\dots S_n\$_nS_1$ and $\#S_1\$\dots\$S_n\$$ are Lyndon.
Therefore, $\BWT(S_1\$_1\dots S_n\$_n) = \BBWT(\$_1S_2\$_2\dots S_n\$_nS_1)$ and $\BWT(S_1\$\dots\$S_n\$\#) = \BBWT(\#S_1\$\dots\$S_n\$)$
and we can apply our $\BBWT$ construction algorithm to these rotations.

Note that all other $\BWT$ variants for string collections given in \cite{cenzato2024survey} can be simulated with the multidollar $\BWT$ by using different relative orders of the separator symbols.

\section{Practical Lyndon grammar construction}
\label{sec:hash_construction}

As shown in \cite{badkobeh2022back}, the simple folklore algorithm for constructing the Lyndon array
can be used to construct the Lyndon forest.
With a fitting lookup data structure, we can trivially construct the Lyndon SLP instead.
We essentially use Algorithm~\ref{alg:online_basic}.

\begin{algorithm}
	\caption{Prepending $S[i]$ \cite{badkobeh2022back}}
	\label{alg:online_basic}
	Find $c$ with $\val{X_c} = S[i]$\Comment*[r]{Set $c$ to the symbol generating $S[i]$}
	\While{$\mathsf{stack}$ is not empty}{
		$t \gets \mathsf{stack}.\mathsf{top()}$\;
		\lIf{$\val{X_t} \lexleq \val{X_c}$}{\Break}
		\Else{
			$\mathsf{stack}.\mathsf{pop()}$\;
			Find $X_{c'}$ with $X_{c'}\to X_cX_t$\;
			$c \gets c'$\;
		}
	}
	$\mathsf{stack}.\mathsf{push(}\mathsf{c)}$\;
\end{algorithm}

There are two steps that have to be explained:
First, one must be able to compare $\val{X_t}$ and $\val{X_c}$ lexicographically, and secondly, one must be able to find $X_{c'}$ with $X_{c'}\to X_cX_t$ (this is also called the \emph{naming function}).
For the former, we store for each symbol on the stack a fixed-length prefix of the generated string. In many cases, this is sufficient for determining the lexicographical order. Possibilities to handle the other cases are explained in Sections~\ref{sec:pss_comp}, \ref{sec:naive_comp} and \ref{sec:btree_construction}.
The following paragraphs will deal with the naming function.

For the naming function, we use a hash table to find the symbol names. This provides (expected) constant time lookup/insertion.\footnote{
Note that this is also possible in $\bigO{\log\log g}$ deterministic time with at most $2g + g\log g + o\bra{g\log g})$ bits of memory, where $g$ is the number of symbols of the grammar \cite{takabatake_et_al:LIPIcs.ESA.2017.67}.
}
In practice however, $N$ hash table lookups are unnecessarily slow, especially because we assume the input to be repetitive.
In particular, having fewer but longer keys is cache friendlier and thus faster on modern computer architectures.

Specifically, we decrease the number of hash table accesses by partitioning the nodes of the Lyndon forest into \emph{heavy} and \emph{light} nodes and querying a second hash table $H$ for each heavy node. The naming function is then only used when such a query to $H$ does not return an answer.
Leaves of the Lyndon forest are considered heavy. An inner node $v$ is considered heavy, if and only if its number of \emph{immediate heavy descendants} (heavy descendants for which only light nodes occur on the path to $v$) exceeds a fixed constant $n_{\mathit{thres}} > 1$ (see Figure~\ref{fig:heavy_light}).\footnote{In our experiments, $n_{\mathit{thres}} = 31$.}

\begin{figure}
	\centering
	\def\examplespacing{0.5}
	\def\treevspacing{0.4cm}
	\newcommand{\treenode}[5]{
		\draw node[inner sep=2pt,circle,fill=#5] (#1) at (#2,#3) {};
	}
	\newcommand{\treen}[5]{
		\draw let \p1 = (#2), \p2 = (#3) in node[inner sep=2pt,circle,fill=#5,draw=#4] (#1) at (\x1*0.5+\x2*0.5,{max(\y1,\y2)+\treevspacing}) {};
		\draw[gray] (#1) -- (#2);
		\draw[gray] (#1) -- (#3)
	}
	\colorlet{lightnode}{gray}
	\begin{tikzpicture}
		\begin{scope}
			\treenode{n00}{0*\examplespacing}{0}{factor_col}{factor_col};
			\treenode{n10}{1*\examplespacing}{0}{factor_col}{factor_col};
			\treenode{n20}{2*\examplespacing}{0}{factor_col}{factor_col};
			\treenode{n30}{3*\examplespacing}{0}{factor_col}{factor_col};
			\treenode{n40}{4*\examplespacing}{0}{factor_col}{factor_col};
			\treenode{n50}{5*\examplespacing}{0}{factor_col}{factor_col};
			\treenode{n60}{6*\examplespacing}{0}{factor_col}{factor_col};
			\treenode{n70}{7*\examplespacing}{0}{factor_col}{factor_col};
			\treen{n01}{n00}{n10}{lightnode}{white};
			\treen{n11}{n20}{n30}{lightnode}{white};
			\treen{n21}{n50}{n60}{lightnode}{white};
			\treen{n12}{n40}{n21}{factor_col}{factor_col};
			\treen{n02}{n01}{n11}{factor_col}{factor_col};
			\treen{n03}{n02}{n12}{lightnode}{white};
			\treen{n04}{n03}{n70}{factor_col}{factor_col};
		\end{scope}
	\end{tikzpicture}
	\caption{
		Partial representation of a Lyndon Forest with $n_\mathit{thres} = 2$.
		Heavy nodes are coloured~in~({\color{factor_col}$\blacksquare$}) while light nodes are not.
	}
	\label{fig:heavy_light}
\end{figure}

For a heavy node $v$ of the Lyndon forest we query $H$ with the sequence of immediate heavy descendants $v_1,\dots,v_k$ of $v$.
This means that we only store heavy nodes on the stack, for a light node we instead store all immediate heavy descendants.

Different occurrences of the same symbol represent the exact same Lyndon tree and therefore correspond to the exact same sequence of immediate heavy descendants.
Consequently, if a subtree isomorphic to the subtree rooted at a heavy node has occurred previously, the query to $H$ is sufficient to resolve it.
Otherwise, we use the naming function to find the correct symbol and insert the sequence of immediate heavy descendants with it into $H$.

Note that each heavy node has at most $2n_{\mathit{thres}}$ immediate heavy descendants because it has exactly two children (or zero), each of which is either light (and thus has at most $n_{\mathit{thres}}$ immediate heavy descendants) or heavy itself.
This implies that the total memory usage increases by a factor of at most $\bigO{n_{\mathit{thres}}}$.

\subsection{Constant time suffix comparisons}
\label{sec:pss_comp}

Because the Lyndon forest (and thus the Lyndon SLP) is closely related to the Lyndon array, we can use the Lyndon array to decide whether $\val{X_t}\lexleq \val{X_c}$ holds in constant time.
Specifically, when considering $S[i]$ in Algorithm~\ref{alg:bwt_from_grammar}, we have $\val{X_c}\lexlt\val{X_t}$ while $\lambda[i] > \abs{\val{X_c}}$.

Unfortunately, as far as we know, all linear-time algorithms for constructing the Lyndon array directly from the text need random access to the text \cite{badkobeh2022back,bille2019space,ellert:LIPIcs.ESA.2022.48} and thus the text should reside in main memory.

For a $\BWT$ of a real-world string collection $\mathcal S$ however, this is typically not a problem because, as noted in Section~\ref{sec:bbwt_ebwt}, the SLPs of the strings in $\mathcal S$ can be computed independently (with a shared dictionary).
Consequently, it suffices to have only the string (and its (succinct) Lyndon array) in main memory whose Lyndon SLP is currently computed.
This increases the RAM usage by $n_{\mathrm{max}}(\lceil\log_2\abs\Sigma\rceil + 2)$ bits, where $n_{\mathrm{max}}$ is the length of the longest string in $\mathcal S$.
For real biological data, $n_{\mathrm{max}}$ is generally small compared to the size of the entire input.

\subsection{Na{\"i}ve suffix comparisons using the Lyndon grammar}
\label{sec:naive_comp}

In this section, we describe a simple method for comparing $\val{X_t}$ and $\val{X_c}$ lexicographically which requires constant extra memory and works well in practice.
We assume that $a,b < i$ holds for each rule $X_i\to X_aX_b$. This is clearly the case for SLPs constructed with Algorithm~\ref{alg:online_basic} if we assign increasing indices to new symbols.

First note that, because we assign equal Lyndon words to equal symbols, $\val{X_t} = \val{X_c}$ if and only if $t = c$.
The key to our algorithm is that we find the longest common symbol $X_\ell\in\leftmost\bra{X_t}\cap\leftmost\bra{X_c}$.
Additionally, we determine $X_{t'}\in\leftmost\bra{X_t}$ and $X_{c'}\in\leftmost\bra{X_c}$ with $X_{t'}\to X_\ell X_{r}$ and $X_{c'}\to X_\ell X_{r'}$.
Note that, by definition, $r\neq r'$ (otherwise, $c'=t'$, contradicting the choice of $\ell$) and therefore the lexicographical order of $\val{X_r}$ and $\val{X_{r'}}$ is the same as the lexicographical order of $\val{X_t}$ and $\val{X_c}$.
Formally, $\val{X_t}\lexlt\val{X_c} \iff \val{X_r}\lexlt\val{X_{r'}}$.
If such a tuple $(\ell,t',c')$ does not exist, either $\val{X_t}$ and $\val{X_c}$ do not share a non-empty prefix (i.e., $\val{X_t}[0]\neq\val{X_c}[0]$),
or one of $X_t$ and $X_c$ is a prefix of the other.

Note that the indices of the elements in $\leftmost\bra{X_i}$ have the same relative order as the lengths of the generated strings in the sense that
for all $a,b\in\leftmost\bra{X_i}$ we have $a<b$ if and only if $\abs{\val{X_a}}<\abs{\val{X_b}}$.
For this reason, we can proceed with a two-pointer search to find $\ell$ (and $t'$ and $c'$).
More specifically, assuming that the desired $\ell\in\leftmost\bra{X_t}\cap\leftmost\bra{X_c}$ exists, we have $\ell\in\leftmost\bra{X_a}$ if $\abs{\val{X_t}} < \abs{\val{X_c}}$ for $X_c\to X_aX_b$, and vice versa.
Algorithm~\ref{alg:naive_comp} shows the procedure.
\begin{algorithm}
	\caption{Comparing two symbols lexicographically}
	\label{alg:naive_comp}
	\SetKwFunction{compare}{compare}
	\SetKwProg{myproc}{Procedure}{}{}
	\myproc{\compare{$X_t$, $X_c$}}{
		\lIf{$t=c$}{\Return $\val{X_t}=\val{X_c}$}
		$((\ell,\ell'), (r,r')) \gets ((t,\bot),(c,\bot))$\;
		\While{$X_\ell$ and $X_r$ are non-terminals with $X_\ell\to X_aX_b$, $X_r\to X_{a'}X_{b'}$}{
			\lIf(\tcc*[f]{note that $b\neq b'$}){$a=a'$}{\Return \compare{$X_b$, $X_{b'}$}}
			\lIf{$a<a'$}{$(r,r')\gets(a',b')$}
			\lElse{$(\ell,\ell')\gets(a,b)$}
		}
		\lWhile{$X_\ell$ is a non-terminal with $X_\ell\to X_aX_b$}{ $(\ell,\ell')\gets (a,b)$ }
		\lWhile{$X_r$ is a non-terminal with $X_r\to X_{a'}X_{b'}$}{ $(r,r')\gets (a',b')$ }
		\tcc{$X_\ell$ and $X_r$ are both terminal symbols. Assume $X_\ell\to x$, $X_r\to x'$}
		\If{$x \neq x'$}{
			\lIf{$x > x'$}{\Return $\val{X_t}\lexlt\val{X_c}$}
			\lElse{\Return $\val{X_t}\lexgt\val{X_c}$}
		}
		\tcc{$l=r$ and $\ell'\neq r'$}
		\lIf{$\ell'\neq\bot\land r'\neq\bot$}{\Return \compare{$X_{\ell'}$,$X_{r'}$}}
		\lElseIf(\tcc*[f]{$\val{X_t}\text{ is a prefix of }\val{X_c}$}){$\ell'=\bot$}{\Return $\val{X_t}\lexleq\val{X_c}$}
		\lElse(\tcc*[f]{$r'=\bot\implies\val{X_c}\text{ is a prefix of }\val{X_t}$}){\Return $\val{X_t}\lexgt\val{X_c}$}
	}
\end{algorithm}

Because in each step, at least one of the symbol indices decreases, the time complexity for a comparison is at most linear in the size of the SLP.
In fact, because in each step we go from a symbol to one of its children, the time complexity is actually bounded by the \emph{height} of the SLP.
This in turn implies a time complexity of $\bigO{Ng}$ when using Algorithm~\ref{alg:online_basic} with the described method for constructing the Lyndon SLP.

Note that this worst-case time complexity is tight, e.g.\ on the string $a^kba^k$ ($k\in\mathbb N$).
However, the Lyndon forests of random strings have expected height proportional to the logarithm of the input size \cite{mercier2013height} and our approach works very well in practice (see Section~\ref{sec:experiments}).

\subsection{Construction in $\bigO{N\log g + g\log^2g}$}
\label{sec:btree_construction}

In this Section, we describe an algorithm that is able to compute the Lyndon grammar online in $\bigO{N\log g + g\log^2g}$ deterministic time
from right to left in a streaming fashion using $\bigO{g}$ words of extra memory.

Basically, this is done by maintaining the grammar's set of symbols $\mathcal S$ in an ordered sequence, lexicographically sorted by their respective generated strings.
Using e.g.\ a B-Tree \cite{bayer1970organization}, one can then find the rank of any symbol in $\bigO{\log\abs{\mathcal S}}$ time.
Thus, determining the lexicographical order of two symbols in $\mathcal S$ can also be done in $\bigO{\log\abs{\mathcal S}}$ time.
What remains to be shown is how the symbols can be maintained in this sorted arrangement.

As shown in Section~\ref{sec:lyndon_grammar_properties}, a lexicographically sorted Lyndon grammar has a forest-structure (cf.\ Figure~\ref{fig:interval_forest}).
This first-symbol forest can be represented using a \emph{balanced parenthesis sequence (BPS)} of length $2\abs{\mathcal S}$ \cite{succinctTrees}, which can be obtained using a depth-first traversal of the first-symbol forest (starting at the roots) by writing an opening parenthesis \texttt{`('} when visiting a node for the first time and a closing parenthesis \texttt{`)'} when all subtrees of a node have been visited \cite{succinctTrees}.
Each parenthesis pair corresponds to a symbol in the grammar, where the $i$th opening parenthesis corresponds to the $i$th smallest symbol (by lexicographical order).
For a grammar with symbol set $\mathcal S$, let this BPS be $\mathcal B_{\mathcal S}$.
We represent $\mathcal B_{\mathcal S}$ as an ordered sequence $\mathcal T_{\mathcal S}$, which contains two markers $\texttt(_i$ and $\texttt)_i$ for each symbol $X_i$ in $\mathcal S$, such that the ranks of $\texttt(_i$ and $\texttt)_i$ are the indices of $X_i$'s opening and closing parenthesis in $\mathcal B_{\mathcal S}$, respectively.
For instance, the sequence $\mathcal T_{\mathcal S}$ for the lexicographically sorted Lyndon grammar of our running example (see Figure~\ref{fig:interval_forest}) would be $\texttt(_1\texttt(_2\texttt(_3\texttt(_4\texttt)_4\texttt)_3\texttt)_2\texttt(_5\texttt(_6\texttt)_6\texttt)_5\texttt)_1\texttt(_7\texttt(_8\texttt)_8\texttt)_7\texttt(_9\texttt)_9$.

Now consider adding a new symbol $X_i\to X_aX_b$, where $X_a$ and $X_b$ are in $\mathcal T$ (i.e.\ $X_i\notin\mathcal S$, $X_a,X_b\in\mathcal S$).
By Lemma~\ref{cor:symbol_tree}, the parent of $X_i$'s parenthesis pair in $\mathcal B_{\mathcal S\cup\set{X_i}}$ must be $X_a$'s parenthesis pair.
Therefore, it suffices to determine $X_i$'s lexicographically smallest ``sibling'' $X_j\to X_aX_c$ ($X_j\in\mathcal S$) with $\val{X_c} \lexgt \val{X_b}$; $X_i$'s parenthesis pair must appear immediately in front of $\texttt(_j$ in $\mathcal T_{\mathcal S\cup\set{X_i}}$.
If there is no such sibling, $X_i$ is (currently) the largest child of $X_a$ and thus $X_i$'s parenthesis pair must be immediately in front of $\texttt)_a$ instead.
Note that comparing $\val{X_b}$ and $\val{X_c}$ is possible in $\bigO{\log\abs{\mathcal{S}}}$ because both $X_b$ and $X_c$ are in $\mathcal T_{\mathcal S}$.

In order to be able to find the correct sibling of $X_i$ as described, we additionally maintain for each $X_a$ an ordered sequence $\mathcal T_a$ containing the symbols $\mathcal S_a = \set{X_k\to X_aX_b\mid X_k,X_b\in\mathcal S}$ in lexicographical order.
Inserting $X_i\to X_aX_b$ into $\mathcal T_a$ can then be accomplished with $\bigO{\log\abs{\mathcal S_a}}\subseteq\bigO{\log\abs{\mathcal S}}$ comparisons, each of which is possible in $\bigO{\log{\abs{\mathcal S}}}$ via $\mathcal T_{\mathcal S}$.

In total, we obtain a time complexity of $\bigO{\abs{\mathcal S}\log^2\abs{\mathcal S}}$ for maintaining the grammar's symbols lexicographically sorted, and $\bigO{N\log \abs{\mathcal S}}$ for constructing the Lyndon forest.

\section{Experiments}
\label{sec:experiments}

\pgfplotscreateplotcyclelist{colorlist}{%
mark=x,red\\%
mark=x,black\\%
mark=+,teal,dashed\\%
mark=+,magenta,dashed\\%
mark=+,blue,dashed\\%
mark=+,green!60!black,dashed\\%
mark=+,brown,densely dashed\\
mark=+,gray,densely dotted\\
}
\pgfplotsset{
	width=7cm,height=5.2cm,
	cycle list name=colorlist,
	every axis plot/.append style={every mark/.append style={solid,line width=0.2mm},mark size=2pt,line width=0.1mm},
}

\begin{figure}[t]
	\small
	\begin{tikzpicture}
		\begin{axis}[
			xmode=log,ymode=log,
			xlabel=number of haplotypes,
			ylabel=time (s),
			legend style={anchor=south,at={(11.5cm,6.5cm)},legend columns=3,cells={anchor=west}},
			at={(-0.75cm,0cm)},
			anchor=north east]

			\addplot plot coordinates {
				(50,105.220)
				(100,179.430)
				(250,400.140)
				(500,755.370)
				(750,1139.760)
				(1000,1492.700)
			};
			\addlegendentry{Ours $\bigO{n\log g + g\log^2g}$}

			\addplot plot coordinates {
				(50,81.280)
				(100,152.870)
				(250,350.710)
				(500,693.720)
				(750,1036.100)
				(1000,1373.100)
			};
			\addlegendentry{Ours $\bigO{ng}$}

			\addplot plot coordinates {
			(50,176.380)
			(100,272.4)
			(250,556.0)
			(500,1032.0)
			(750,1524.1)
			(1000,1982.7)
			};
			\addlegendentry{\bigbwt{}\textsubscript{$500$}}

			\addplot plot coordinates {
			(50,150.3)
			(100,244.0)
			(250,536.2)
			(500,1051.5)
			(750,1524.1)
			(1000,1982.7)
			};
			\addlegendentry{\bigbwt{}\textsubscript{$200$}}

			\addplot plot coordinates {
			(50,169.0)
			(100,289.2)
			(250,665.4)
			(500,1314.4)
			(750,2007.0)
			(1000,2626.8)
			};
			\addlegendentry{\bigbwt{}\textsubscript{$100$}}

			\addplot plot coordinates {
			(50,225.6)
			(100,411.7)
			(250,992.1)
			(500,1961.0)
			(750,2989.6)
			(1000,4045.0)
			};
			\addlegendentry{\bigbwt{}\textsubscript{$50$}}

			\addplot plot coordinates {
			(50,178.0)
			(100,247.8)
			(250,474.2)
			(500,918.1)
			(750,1336.1)
			(1000,1758.4)
			};
			\addlegendentry{\rpfbwt{}}

			\addplot plot coordinates {
			(50,177.3)
			(100,355.8)
			(250,1082.6)
			(500,2610.2)
			(750,4195.0)
			(1000,5934.0)
			};
			\addlegendentry{\libsais{}}
		\end{axis}
		\begin{axis}[
			xmode=log,ymode=log,
			xlabel=number of haplotypes,
			ylabel=RAM (GiB),
			at={(0.75cm,0cm)},
			anchor=north west]

			\addplot plot coordinates {
				(50,2.133)
				(100,2.132)
				(250,2.314)
				(500,3.287)
				(750,4.116)
				(1000,4.614)
			};

			\addplot plot coordinates {
			(50,0.903)
			(100,0.903)
			(250,1.027)
			(500,1.542)
			(750,1.553)
			(1000,1.747)
			};

			\addplot plot coordinates {
			(50,2.40)
			(100,2.99)
			(250,4.51)
			(500,7.44)
			(750,10.37)
			(1000,13.47)
			};

			\addplot plot coordinates {
			(50,1.44)
			(100,1.69)
			(250,2.39)
			(500,3.80)
			(750,5.14)
			(1000,6.83)
			};

			\addplot plot coordinates {
			(50,1.07)
			(100,1.28)
			(250,1.87)
			(500,2.92)
			(750,3.93)
			(1000,5.48)
			};

			\addplot plot coordinates {
			(50,1.07)
			(100,1.37)
			(250,2.41)
			(500,4.81)
			(750,7.21)
			(1000,9.62)
			};

			\addplot plot coordinates {
			(50,1.94)
			(100,2.22)
			(250,3.05)
			(500,4.65)
			(750,6.25)
			(1000,8.37)
			};

			\addplot plot coordinates {
			(50,24.78)
			(100,49.56)
			(250,123.9)
			(500,247.79)
			(750,371.69)
			(1000,495.58)
			};
		\end{axis}
	\end{tikzpicture}
	\caption{
		Wall clock time and maximum resident memory of $\BWT$ construction algorithms on chromosome 19 sequences.
		The sequences were concatenated and all programs used only one thread.
	}
	\label{fig:bwt:chr19}
\end{figure}

\pgfplotscreateplotcyclelist{colorliste}{%
mark=x,black\\%
mark=x,orange\\
mark=x,gray\\
mark=x,yellow\\
mark=+,brown,densely dashed\\%
mark=+,gray,densely dotted\\
mark=+,magenta,densely dashed\\%
mark=+,blue,densely dashed\\%
mark=+,green!60!black,densely dashed\\%
}
\pgfplotsset{
	cycle list name=colorliste,
}
\begin{figure}
	\small
	\begin{tikzpicture}
		\begin{axis}[
			xmode=log,ymode=log,
			xlabel=number of haplotypes,
			ylabel=time (s),
			legend style={anchor=south,at={(11.5cm,5.25cm)},legend columns=3,cells={anchor=west}},
			at={(-0.75cm,0cm)},
			anchor=north east]

			\addplot plot coordinates {
				(50,87.1)
				(100,158.5)
				(250,385.5)
				(500,733.8)
				(750,1092.8)
				(1000,1443.9)
			};
			\addlegendentry{Ours $\$\eBWT$ $\bigO{Ng}$}

			\addplot plot coordinates {
				(50,130.8)
				(100,246.8)
				(250,590.9)
				(500,1170.6)
				(750,1766.8)
				(1000,2324.6)
			};
			\addlegendentry{Ours $\$\eBWT$ $\bigO{N}$}

			\addplot plot coordinates {
				(50,91.890)
				(100,170.130)
				(250,403.860)
				(500,793.020)
				(750,1181.810)
				(1000,1574.440)
			};
			\addlegendentry{Ours $\eBWT$ $\bigO{Ng}$}

			\addplot plot coordinates {
				(50,136.140)
				(100,257.870)
				(250,621.170)
				(500,1235.900)
				(750,1842.890)
				(1000,2444.770)
			};
			\addlegendentry{Ours $\eBWT$ $\bigO{N}$}

			\addplot plot coordinates {
				(50,155.330)
				(100,212.510)
				(250,383.880)
				(500,716.370)
				(750,1039.340)
				(1000,1379.700)
			};
			\addlegendentry{\rpfbwt}

			\addplot plot coordinates {
				(50,233.140)
				(100,499.900)
				(250,6429.000)
				(500,18153.000)
				(750,30796.000)
				(1000,42907.000)
			};
			\addlegendentry{\ropebwt}

			\addplot plot coordinates {
				(50,235.270)
				(100,456.860)
				(250,1125.050)
				(500,2271.800)
				(750,3380.740)
				(1000,4499.000)
			};
			\addlegendentry{\grlbwt}

			\addplot plot coordinates {
				(50,140.700)
				(100,258.960)
				(250,629.170)
				(500,1261.350)
				(750,1877.380)
				(1000,2541.610)
			};
			\addlegendentry{\pfpebwt{}\textsubscript{$100$}}

			\addplot plot coordinates {
				(50,58.310)
				(100,99.950)
				(250,233.230)
				(500,471.160)
				(750,733.420)
				(1000,982.760)
			};
			\addlegendentry{\cmsbwt}
		\end{axis}
		\begin{axis}[
			xmode=log,ymode=log,
			xlabel=number of haplotypes,
			ylabel=RAM (GiB),
			at={(0.75cm,0cm)},
			anchor=north west]

			\addplot plot coordinates {
				(50,0.959)
				(100,0.960)
				(250,1.086)
				(500,1.599)
				(750,1.681)
				(1000,2.036)
			};
			
			\addplot plot coordinates {
				(50,0.959)
				(100,0.960)
				(250,1.089)
				(500,1.593)
				(750,1.713)
				(1000,2.072)
			};

			\addplot plot coordinates {
				(50,0.959)
				(100,0.959)
				(250,1.082)
				(500,1.597)
				(750,1.684)
				(1000,2.010)
			};

			\addplot plot coordinates {
				(50,0.957)
				(100,0.957)
				(250,1.088)
				(500,1.592)
				(750,1.695)
				(1000,2.005)
			};

			\addplot plot coordinates {
				(50,1.943)
				(100,2.222)
				(250,3.055)
				(500,4.653)
				(750,6.135)
				(1000,7.391)
			};

			\addplot plot coordinates {
				(50,30.346)
				(100,60.632)
				(250,72.322)
				(500,72.323)
				(750,72.332)
				(1000,72.362)
			};

			\addplot plot coordinates {
				(50,0.149)
				(100,0.227)
				(250,0.355)
				(500,1.076)
				(750,1.117)
				(1000,1.121)
			};

			\addplot plot coordinates {
				(50,1.080)
				(100,1.265)
				(250,1.785)
				(500,3.129)
				(750,4.680)
				(1000,6.232)
			};
			
			\addplot plot coordinates {
				(50,2.413)
				(100,3.144)
				(250,4.173)
				(500,6.481)
				(750,9.089)
				(1000,11.573)
			};
		\end{axis}
	\end{tikzpicture}
	\caption{
		Wall clock time and maximum resident memory of tools for constructing $\BWT$ variants for string collection on chromosome 19 sequences.
		All programs used only one thread.
	}
	\label{fig:ebwt:chr19}
\end{figure}

\pgfplotscreateplotcyclelist{colorliste}{%
mark=x,black\\%
mark=x,orange\\
mark=+,brown,densely dashed\\
mark=+,gray,densely dotted\\%
mark=+,magenta,densely dashed\\%
mark=+,blue,densely dashed\\%
mark=+,green!60!black,densely dashed\\%
}
\pgfplotsset{
	cycle list name=colorliste,
}
\begin{figure}
	\small
	\begin{tikzpicture}
		\begin{axis}[
			ymode=log,xmode=log,
			ylabel=time (s),
			xticklabels={1,4,8,16,32},xtick={1,4,8,16,32},
			legend style={anchor=south,at={(5.5cm,5.9cm)},legend columns=3,cells={anchor=west}},
			at={(-0.75cm,0cm)},
			anchor=north east]

			\addplot plot coordinates {
				(1,1401.880)
				(4,450.550)
				(8,251.720)
				(16,151.550)
				(32,120.240)
			};
			\addlegendentry{Ours $\$\eBWT$ $\bigO{Ng}$}

			\addplot plot coordinates {
				(1,2113.510)
				(4,635.050)
				(8,343.730)
				(16,199.540)
				(32,150.190)
			};
			\addlegendentry{Ours $\$\eBWT$ $\bigO{N}$}

			\addplot plot coordinates {
				(1,1379.700)
				(4,1235.480)
				(8,1211.830)
				(16,1199.340)
				(32,1216.640)
			};
			\addlegendentry{\rpfbwt}

			\addplot plot coordinates {
				(1,42907.000)
				(4,12808.000)
				(8,7870.000)
				(16,5623.000)
				(32,4440.000)
			};
			\addlegendentry{\ropebwt}

			\addplot plot coordinates {
				(1,4499.000)
				(4,1352.990)
				(8,819.820)
				(16,566.640)
				(32,471.880)
			};
			\addlegendentry{\grlbwt}

			\addplot plot coordinates {
				(1,2541.610)
				(4,2211.900)
				(8,3966.000)
				(16,7841.000)
				(32,8085.000)
			};
			\addlegendentry{\pfpebwt{}\textsubscript{$100$}}

			\addplot plot coordinates {
				(1,982.760)
			};
			\addlegendentry{\cmsbwt}
		\end{axis}
		\begin{axis}[
			xmode=log,ymode=log,
			ylabel=RAM (GiB),
			xticklabels={1,4,8,16,32},xtick={1,4,8,16,32},
			at={(0.75cm,0cm)},
			anchor=north west]

			\addplot plot coordinates {
				(1,2.009)
				(4,3.049)
				(8,3.322)
				(16,4.516)
				(32,6.680)
			};
			
			\addplot plot coordinates {
				(1,1.989)
				(4,3.098)
				(8,3.216)
				(16,4.461)
				(32,6.370)
			};

			\addplot plot coordinates {
				(1,7.391)
				(4,7.391)
				(8,7.391)
				(16,7.391)
				(32,7.391)
			};

			\addplot plot coordinates {
				(1,72.362)
				(4,72.553)
				(8,72.746)
				(16,73.136)
				(32,73.911)
			};

			\addplot plot coordinates {
				(1,1.121)
				(4,8.336)
				(8,8.409)
				(16,8.553)
				(32,8.843)
			};

			\addplot plot coordinates {
				(1,6.232)
				(4,6.232)
				(8,6.232)
				(16,6.232)
				(32,6.232)
			};

			\addplot plot coordinates {
				(1,11.573)
			};
		\end{axis}

		\begin{axis}[
			ymode=log,xmode=log,
			xlabel=number of threads,
			ylabel=time (s),
			xticklabels={1,4,8,16,32},xtick={1,4,8,16,32},
			anchor=north east,
			at={(-0.6,-4.5cm)}]

			\addplot plot coordinates {
				(1,663.130)
				(4,183.200)
				(8,99.440)
				(16,59.520)
				(32,40.820)
			};

			\addplot plot coordinates {
				(1,1166.450)
				(4,312.980)
				(8,164.460)
				(16,92.860)
				(32,60.630)
			};

			\addplot plot coordinates {
				(1,2867.680)
				(4,1116.300)
				(8,951.390)
				(16,890.810)
				(32,865.740)
			};

			\addplot plot coordinates {
				(1,18583.000)
				(4,6615.000)
				(8,4648.000)
				(16,3732.000)
				(32,3293.060)
			};

			\addplot plot coordinates {
				(1,1824.860)
				(4,558.040)
				(8,364.830)
				(16,265.240)
				(32,223.670)
			};

			\addplot plot coordinates {
				(1,2082.830)
				(4,1742.820)
				(8,1627.370)
				(16,1625.080)
				(32,3510.680)
			};

			\addplot plot coordinates {
				(1,781.550)
			};
		\end{axis}

		\begin{axis}[
			xmode=log,ymode=log,
			xlabel=number of threads,
			ylabel=RAM (GiB),
			xticklabels={1,4,8,16,32},xtick={1,4,8,16,32},
			anchor=north west,
			at={(0.75cm,-4.5cm)}]

			\addplot plot coordinates {
				(1,1.069)
				(4,2.392)
				(8,2.546)
				(16,2.406)
				(32,2.681)
			};
			
			\addplot plot coordinates {
				(1,1.082)
				(4,2.388)
				(8,2.467)
				(16,2.431)
				(32,2.633)
			};

			\addplot plot coordinates {
				(1,5.433)
				(4,5.433)
				(8,5.433)
				(16,5.433)
				(32,5.433)
			};

			\addplot plot coordinates {
				(1,110.882)
				(4,111.068)
				(8,111.253)
				(16,111.619)
				(32,112.363)
			};

			\addplot plot coordinates {
				(1,0.209)
				(4,4.189)
				(8,4.252)
				(16,4.377)
				(32,4.626)
			};

			\addplot plot coordinates {
				(1,3.433)
				(4,3.433)
				(8,3.433)
				(16,3.433)
				(32,3.433)
			};

			\addplot plot coordinates {
				(1,11.925)
			};
		\end{axis}
	\end{tikzpicture}
	\caption{
		Wall clock time and maximum resident memory of tools for constructing $\BWT$ variants on $1000$ Chromosome~19 sequences (top) and $10^6$ SARS-CoV-2 sequences (bottom) using multiple threads. Note that \cmsbwt does not support multithreading. 
		For the SARS-CoV-2 data, \pfpebwt required the \texttt{----reads} flag.
	}
	\label{fig:ebwt:chr19:par}
\end{figure}

The source code of our implementation is publicly available.\footnote{\url{https://gitlab.com/qwerzuiop/lyndongrammar}}

We compare our $\BWT$ algorithms for single texts with the programs \bigbwt\footnote{\url{https://gitlab.com/manzai/Big-BWT}, last accessed: 22.04.2025, git hash \texttt{944cb27}} \cite{boucher2019prefix} and \rpfbwt\footnote{\url{https://github.com/marco-oliva/r-pfbwt}, last accessed: 22.04.2025, git hash \texttt{1fea5c3}} \cite{oliva2023recursive}, as well as \libsais.\footnote{\url{https://github.com/IlyaGrebnov/libsais}, last accessed: 22.04.2025, git hash \texttt{a138159}}
The latter uses a modified version of the Suffix-Array Induced Sorting (SAIS) algorithm \cite{nong2009linear} to compute the $\BWT$ and, since it is based on the currently fastest $\SA$ construction implementation for general real-world data \cite{olbrich2024generic}, can be viewed as a lower bound for algorithms using the suffix array to compute the $\BWT$.

For the $\BWT$ of text collections, we compare with \pfpebwt\footnote{\url{https://github.com/davidecenzato/PFP-eBWT}, last accessed: 22.04.2025, git hash \texttt{4ca75ce}} \cite{boucher2021computing}, \rpfbwt \cite{oliva2023recursive}, \ropebwt\footnote{\url{https://github.com/lh3/ropebwt3}, last accessed: 22.04.2025, git hash \texttt{36a6411}} \cite{li2024bwt}, \grlbwt\footnote{\url{https://github.com/ddiazdom/grlBWT}, last accessed: 22.04.2025, git hash \texttt{f09e7fa}} \cite{diaz2023efficient} and \cmsbwt\footnote{\url{https://github.com/fmasillo/cms-bwt}, last accessed: 22.04.2025, git hash \texttt{1099d07}. Note that the speed and memory usage of \cmsbwt has improved massively since its publication in \cite{masillo:LIPIcs.ESA.2023.83}.} \cite{masillo:LIPIcs.ESA.2023.83} (for \cmsbwt we used the first sequence in the collection as reference).
All tool except the last one support multi-threading.
Note that not all of these tools compute the same $\BWT$ variant \cite{cenzato2024survey}.
Also note that all algorithms based on PFP as well as \grlbwt are semi-external, i.e., write/read some temporary data to/from disk.

As test data, we use up to 1000 human Chromosome~19 haplotypes from \cite{boucher2021phoni} ($\approx6\cdot10^{10}$bp) and $10^6$ SARS-CoV-2 sequences ($\approx 3\cdot10^{10}$bp).\footnote{Downloaded from \url{https://www.covid19dataportal.org} on 28.08.2024.}
All experiments were conducted on a Linux-6.8.0 machine with an Intel~Xeon~Gold~6338 CPU and 512~GB of RAM.
All programs were compiled with GCC~13.3.0.
Before each test, the test file was scanned once to ensure it is cached by the kernel.
The results can be seen in Figures~\ref{fig:bwt:chr19}, \ref{fig:ebwt:chr19} and \ref{fig:ebwt:chr19:par}.
The subscripts for the PFP-based algorithms indicate the used modulus.

For our programs, computing the $\dolEBWT$ using the suffix comparison method from Section~\ref{sec:naive_comp} is generally the fastest. Computing the original $\eBWT$ is slightly slower because we need to find the smallest rotation of each input string.
In the single-threaded case, for both the 1000 Chromosome~19 haplotypes and $10^6$ SARS-CoV-2 sequences, over $98\%$ of the time was spent constructing the grammar (for our fastest algorithm).
Using the linear-time algorithm for constructing the Lyndon array from \cite{bille2019space} to ensure expected linear time complexity slows our programs down by up to $60\%$.
As expected, the suffix comparison method from Section~\ref{sec:btree_construction} is much slower than our other methods.
The increase in memory consumption of our program in the multithreaded case is due to the use of thread-safe hash tables and multiple sequences and stacks residing in main memory.

For the Chromosome~19 collections and a single thread, \cmsbwt is the fastest program (at the cost of high memory usage), followed by \rpfbwt (for larger cases) and our algorithms.
For more threads, our program is always the fastest.
Regarding the memory consumption, \grlbwt uses the least amount of main memory for single-threaded processing, followed by our programs.
For the SARS-CoV-2 sequences, our program is the fastest, especially with multiple threads, and for more than one thread also the most memory efficient.

\section{Conclusion and Further Work}
\label{sec:conclusion}

We described an algorithm to compute the $\BBWT$---and by extension the common \$-$\BWT$ and various versions of the $\eBWT$---from the lexicographically sorted Lyndon grammar of a text or text collection.
Furthermore, we gave an algorithm that lexicographically sorts a Lyndon grammar and discussed approaches to efficiently compute the Lyndon grammar of a text or text collection.
We implemented the described algorithms and found that they outperform other current state-of-the art programs in terms of time or memory consumed (often both).


Currently, we use 32-bit integers for the grammar symbols.
For extremely large and diverse datasets, this may not suffice.

\bibliography{bib}



\end{document}